\numberwithin{equation}{section}
\theoremstyle{plain}
\newtheorem{lemma}{Lemma}[section]
\newtheorem{proposition}[lemma]{Proposition}
\newtheorem{theorem}[lemma]{Theorem}
\newtheorem{corollary}[lemma]{Corollary}
\theoremstyle{definition}
\newtheorem{definition}[lemma]{Definition}
\newtheorem{remark}[lemma]{Remark}
\newtheorem{example}[lemma]{Example}
\begin{document}

\newcommand{\R}{{\mathbb R}}
\newcommand{\C}{{\mathbb C}}
\newcommand{\F}{{\mathbb F}}
\renewcommand{\O}{{\mathbb O}}
\newcommand{\Z}{{\mathbb Z}} 
\newcommand{\N}{{\mathbb N}}
\newcommand{\Q}{{\mathbb Q}}
\renewcommand{\H}{{\mathbb H}}
\newcommand{\Aa}{{\mathcal A}}
\newcommand{\Bb}{{\mathcal B}}
\newcommand{\Cc}{{\mathcal C}}    
\newcommand{\Dd}{{\mathcal D}}
\newcommand{\Ee}{{\mathcal E}}
\newcommand{\Ff}{{\mathcal F}}
\newcommand{\Gg}{{\mathcal G}}    
\newcommand{\Hh}{{\mathcal H}}
\newcommand{\Kk}{{\mathcal K}}
\newcommand{\Ii}{{\mathcal I}}
\newcommand{\Jj}{{\mathcal J}}
\newcommand{\Ll}{{\mathcal L}}    
\newcommand{\Mm}{{\mathcal M}}    
\newcommand{\Nn}{{\mathcal N}}
\newcommand{\Oo}{{\mathcal O}}
\newcommand{\Pp}{{\mathcal P}}
\newcommand{\Qq}{{\mathcal Q}}
\newcommand{\Rr}{{\mathcal R}}
\newcommand{\Ss}{{\mathcal S}}
\newcommand{\Tt}{{\mathcal T}}
\newcommand{\Uu}{{\mathcal U}}
\newcommand{\Vv}{{\mathcal V}}
\newcommand{\Ww}{{\mathcal W}}
\newcommand{\Xx}{{\mathcal X}}
\newcommand{\Yy}{{\mathcal Y}}
\newcommand{\Zz}{{\mathcal Z}}

\newcommand{\zt}{{\tilde z}}
\newcommand{\xt}{{\tilde x}}
\newcommand{\Ht}{\widetilde{H}}
\newcommand{\ut}{{\tilde u}}
\newcommand{\Mt}{{\widetilde M}}
\newcommand{\Llt}{{\widetilde{\mathcal L}}}
\newcommand{\yt}{{\tilde y}}
\newcommand{\vt}{{\tilde v}}
\newcommand{\Ppt}{{\widetilde{\mathcal P}}}
\newcommand{\bp }{{\bar \partial}} 

\newcommand{\ad}{{\rm ad}}
\newcommand{\Om}{{\Omega}}
\newcommand{\om}{{\omega}}
\newcommand{\eps}{{\varepsilon}}
\newcommand{\Di}{{\rm Diff}}
\newcommand{\Aff}{{\rm Aff}}

\renewcommand{\a}{{\mathfrak a}}
\renewcommand{\b}{{\mathfrak b}}
\newcommand{\e}{{\mathfrak e}}
\renewcommand{\k}{{\mathfrak k}}
\newcommand{\pg}{{\mathfrak p}}
\newcommand{\g}{{\mathfrak g}}
\newcommand{\gl}{{\mathfrak gl}}
\newcommand{\h}{{\mathfrak h}}
\renewcommand{\l}{{\mathfrak l}}
\newcommand{\n}{{\mathfrak n}}
\newcommand{\s}{{\mathfrak s}}
\renewcommand{\o}{{\mathfrak o}}
\newcommand{\so}{{\mathfrak so}}
\renewcommand{\u}{{\mathfrak u}}
\newcommand{\su}{{\mathfrak su}}
\newcommand{\ssl}{{\mathfrak sl}}
\newcommand{\ssp}{{\mathfrak sp}}
\renewcommand{\t}{{\mathfrak t }}
\newcommand{\Cinf}{C^{\infty}}
\newcommand{\la}{\langle}
\newcommand{\ra}{\rangle}
\newcommand{\half}{\scriptstyle\frac{1}{2}}
\newcommand{\p}{{\partial}}
\newcommand{\notsub}{\not\subset}
\newcommand{\iI}{{I}}               
\newcommand{\bI}{{\partial I}}      
\newcommand{\LRA}{\Longrightarrow}
\newcommand{\LLA}{\Longleftarrow}
\newcommand{\lra}{\longrightarrow}
\newcommand{\LLR}{\Longleftrightarrow}
\newcommand{\lla}{\longleftarrow}
\newcommand{\INTO}{\hookrightarrow}

\newcommand{\QED}{\hfill$\Box$\medskip}
\newcommand{\UuU}{\Upsilon _{\delta}(H_0) \times \Uu _{\delta} (J_0)}
\newcommand{\bm}{\boldmath}

\title[Circuit size of  partially homogeneous  polynomials]{Lower bounds for the circuit size   of  partially homogeneous polynomials}
\author{H\^ong V\^an L\^e} 
\thanks{H.V.L. is  supported by RVO: 67985840}
\address{Institute of Mathematics of ASCR, Zitna 25, 11567 Praha\\ email: hvle@math.cas.cz}

\medskip

\abstract{In  this paper
we  associate   to  each    multivariate  polynomial  $f$ that is homogeneous  relative to a subset of its variables  a series   of polynomial  families  $P_\lambda (f)$ of $m$-tuples of homogeneous polynomials of equal degree  such that   the circuit  size of any member in $P_\lambda (f)$ is bounded from above  by  the circuit  size of $f$.     This provides a method for obtaining lower  bounds  for
the circuit  size  of $f$  by proving  $(s,r)$-(weak) elusiveness  of the polynomial mapping associated with $P_\lambda (f)$.  We discuss  some algebraic methods  for proving the $(s,r)$-(weak) elusiveness.
We  also improve  estimates in the   normal  homogeneous-form  of an arithmetic circuit  obtained by Raz in \cite{Raz2009}  which   results  in better   lower bounds  for circuit size (Lemma \ref{lem:cor38}, Remark \ref{rem:cor38}).
 Our methods  yield  non-trivial  lower  bound  for  the circuit size of  several  classes of  multivariate   homogeneous  polynomials (Corollary \ref{cor:412}, Example \ref{ex:bi}). }
\endabstract
\subjclass[2010]{Primary 03D15, 68Q17, 13P25}
\maketitle

{\it  To my Teacher   Anatoly Timofeevich Fomenko}


\tableofcontents

\section{Preface}
I had a   fortune   to  study under the   guidance  of Anatoly Timofeevich  for almost all the time I spent  at the Lomonosov Moscow State University. I am greatly  indebted  
to Anatoly  Timofeevich  for his support and  encouragement,   for his  lessons in mathematics  and beyond. 
During my time  at the Lomonosov Moscow  University   Anatoly Timofeevich led, except  regular seminars on topology and differential geometry,
also a seminar on computer  geometry.  My present contribution  in computational complexity       reflects, in particular,  interests  I acquired     in Moscow  under  the influence of   Anatoly Timofeevich. I wish him  good health, happiness and  more success for the coming  years.   

\section{Introduction}
Let $\F$ be a field. Recall that the permanent $Per_n (\F)\in \F[x_{ij}|\, 1\le i, j\le n]$  is defined by
$$ Per_n ([x_{ij}]):= \sum _{\sigma \in \Sigma_n} \Pi_{i =1} ^n  x_{i\sigma(i)}.$$
Finding    non-trivial lower bounds  for the circuit size  or formula size  of the permanent $Per_n$ is a   challenging  problem  in    algebraic computational complexity  theory, especially in     the   $VP$ versus $VNP$   problem \cite{BCS1997}, \cite{Gathen1987}, \cite{Valiant1982}, \cite{SY2010}.
It has been pointed  out by  Mulmuley-Sohoni \cite{M-S2001} that  a   proof  of $VP\not = VNP$ which is  based on  a   generic property of  $poly(n)$-definable    polynomials  will likely  fall in the trap   of the ``natural  proof" \cite{RD1997}. 
Up to now,  there is no known tool  for obtaining  a non-trivial lower bound for the circuit size of the permanent.  The only known  non-trivial lower bound for the formula  size of the permanent is due to  Kalorkoti \cite{Kalorkoti}, which  says that  over any field, the  formula size  of  $Per_n(\F)$ is at least $\Om(n^3)$. (Kalorkoti  proved   the same lower bound for   the formula size of the determinant,   and   Pavel Hrube\v s told me that     Kalorkoti's   proof  works also for the   formula size of the  permanent.) Another  tool for obtaining a non-trivial  lower bound  for the formula  size of the permanent  exploits the Valiant theorem  on the relation between  the formula size and the determinantal complexity  of the permanent \cite{Valiant1979}, \cite{M-R2004}, \cite{M-S2001}.
The determinantal complexity  $c_{det}$, though   better understood than the formula size, is still   very complicated.  The best  lower bound   $c_{det}(Per_n)\ge (n^2/2)$ has been  obtained by  Mignon and Ressayer \cite{M-R2004}.  
To get the quadratic estimate, Mignon and Ressayer   compared  the   second fundamental form  of the  hyper-surface $\{ \det_m (x) = 0\}$ with that of $\{ Per_n(x) = 0\}$.  Mulmuley  and Sohoni suggested  to use  representation theory  to obtain   lower bounds  for $c_{det}(Per_n)$ \cite{M-S2001}.  We also like  to mention the recent paper  \cite{FLMS2013} on   reduction  to  circuits of depth 4. 

In \cite{Raz2009}  Raz introduced   new exciting ideas  to the  study  of lower bounds for the circuit size of   multivariate  polynomials. He   proposed a method  of elusive functions to construct polynomials   of large  circuit  size. Namely from an   $(s,r)$-elusive polynomial mapping  $f:\F^n \to \F^m$, for certain values of $(s,r,n,m)$,
he         obtained  a  multivariate  polynomial $\tilde f\in \F[x_1, \cdots, x_{3n}]$, whose  degree  linearly  depends on $r$,  such that the  circuit size $L(\tilde f)$ of $\tilde f$  is bounded  from below     by a function  of $r$ and $s$.  In  \cite{Le2010} we  developed further  Raz's ideas,  showing the effectiveness of his method
for  fields $\F = \R$ or $\F = \C$.

In this paper  we     develop  Raz's  ideas   in  a somewhat  different direction.  From  a  given     polynomial  $\tilde f$   that is  homogeneous  relative  to a subset of its variables (e.g.  the permanent, see 
Definition \ref{def:ph} below)  we  construct  a polynomial family  $P_\lambda (\tilde f)$ 
of $m$-tuples of  homogeneous   polynomials of  degree $r$  such that  the  $(s,2r-1)$-weak  elusiveness  of  the  polynomial mapping associated with $P_\lambda (\tilde f)$  would imply  a lower bound  for the circuit size of $\tilde f$ in terms of $s$  and $r$.  We  propose several  algebraic
methods  for  verifying   whether  a   homogeneous  polynomial mapping is  $(s,r)$-(weakly) elusive. We show  that our methods   yield  non-trivial  lower  bounds  for
a large class of  homogeneous polynomials. We  discuss some problems in commutative  algebra related with our method. 


The remainder of our paper is  organized as follows.  In Section \ref{sec:norm}  we   recall  basic  notions in the theory of  arithmetic circuits  that are needed  in our paper.
Then we  give a  slightly extended version of the Raz normal form theorem (Theorem \ref{thm:norm})      as well as an improved version of the Rax existence of a universal circuit-graph (Theorem \ref{prop:universal})  in the form that   is needed  in Proposition \ref{prop:eluh}, Corollary \ref{cor:his}  and Lemma  \ref{lem:cor38}. Lemma \ref{lem:cor38} is an improvement of a previous result  by Raz, see   Remark \ref{rem:cor38}. In section \ref{sec:2}  we  relate the notion of  (weakly)-elusive polynomial  mappings with  the circuit size of  a polynomial family  of $m$-tuples of  homogeneous polynomials of equal degree (Proposition \ref{prop:eluh}, Corollary \ref{cor:his}).  In section  \ref{sec:3} we   propose  several algebraic methods   for   proving the $(s,r)$-(weak) elusiveness of
a  polynomial mapping (Proposition \ref{prop:counting}, Remark \ref{rem:nonhom}, Corollaries \ref{cor:counting}, \ref{cor:keluh}, Examples \ref{ex:veronese}, \ref{ex:raz1}). We also  consider  related  problems in commutative  algebra (Problems 1,2,3).  In section \ref{sec:4}  we   associate to each    polynomial $\tilde f$ that is homogeneous
relative to a  subset of its variables  a series  of   polynomial  families  $P_\lambda (\tilde f)$ of  $m$-tuples  of homogeneous  polynomials of equal degree such that  the circuit size of any  member in $P_\lambda (\tilde f)$  is  bounded  from above by the circuit size  $L(\tilde f)$  of $\tilde f$  (Proposition \ref{prop:natu}).  We present  non-trivial   examples of our methods (Examples \ref{ex:raz}. \ref{ex:bi}). We also suggest a method  for obtaining non-trivial lower bounds for the  circuit size of the permanent (Lemmas \ref{lem:span}, \ref{lem:per}).

  
 \
  
{\bf Notations}. In our paper we     assume that $\F$ is an arbitrary field, if not   specified  otherwise.  
The space  of all (resp. homogeneous)  polynomials of degree  $r$ in $n$ variables over $\F$   will be denoted by
$Pol^r(\F^n)$ (resp. $Pol^r_{hom}(\F^n)$), and the space of all ordered $m$-tuples  of (resp. homogeneous) polynomials in $Pol^r(\F^n)$  (resp.  $Pol^r_{hom}(\F^n)$) will be denoted  by $(Pol^r(\F^n))^m$ (resp. $(Pol^r_{hom}(\F^n))^m$). We denote  by $Pol^r(\F^n, \F^m)$ (resp.  $Pol^r_{hom}(\F^n, \F^m)$) the space  of  polynomial mappings (resp. homogeneous polynomial mappings)  of degree $r$ from
$\F^n$  to $\F^m$. If $m = 1$ then   we abbreviate $Pol^r(\F^n, \F)$ as $Pol^r(\F ^n) ^*$.  Clearly, there is a natural linear map
$Pol^r(\F^n) \to Pol^r(\F^n)^*$, which is an isomorphism if $\F$  is a field of characteristic  $0$. We also note that there is a linear isomorphism 
$Pol^r(\F^n, \F^m) = (Pol^r(\F^n) ^ *)^m$.  For $\tilde f \in  Pol^r(\F^n) $ we denote by $\tilde  f^*$ the image of $\tilde f$ under the linear map 
$Pol^r(\F^n) \to Pol^r(\F^n)^*$. For $\lambda \in \F$  we also denote $\tilde f^*(\lambda)$  by $\tilde f(\lambda)$. 
 
 \section{Normal  form of arithmetic circuit and  their universal  circuit-graph}\label{sec:norm} 
 
 In this  section we recall some necessary definitions related  to arithmetic circuits.
We formulate  a  version of the  Raz theorem  on normal-homogeneous  circuit (Theorem \ref{thm:norm}) and a version of the   Raz theorem on the existence of  a universal circuit-graph (Theorem \ref{prop:universal}). These   results  are needed in  Proposition \ref{prop:eluh}, Corollary \ref{cor:his}  that relate the $(s,r)$-weak elusiveness  with a lower bound for the circuit size  of an arithmetic circuit.  This  results in a better  estimate on  the circuit  size  than  that of  Raz, see Lemma \ref{lem:cor38}  and Remark \ref{rem:cor38}.

\begin{definition} \label{def:ac} (cf. \cite[\S 1.1]{Raz2009}) {\it An arithmetic circuit} is a finite  directed acyclic graph whose  nodes are divided into four types:  {\it an input-gate} is a  node of in-degree 0  labelled with  an input variable; {\it a simple gate} is a node of in-degree 0 labelled with the field element 1; {\it  a sum-gate} is a node labelled with $+$ ; {\it a product-gate}  is a node labelled with $\times$; {\it an  output-gate}  is node of out-degree 0  giving the result of the computation. Every edge $(u,v)$ in the graph is labelled with a field element $\alpha$. It computes  the product of $\alpha$ with the polynomial computed by $u$. A product-gate (resp. a sum-gate) computes the product (resp. the sum) of  polynomials computed by the
 edges  that reach it.  We say that a polynomial  $g \in \F [x_1, \cdots, x_n]$ is computed by a circuit if it is computed by one of the circuit output-gates. If a  circuit has $m$ output-gates, then it computes an $m$-tuple of polynomials
$g^i \in \F [x_1, \cdots, x_n], \, i \in[1,m]$. 
The {\it  fanin} of a circuit is
defined to be the maximal in-degree of a node in the circuit, that is, the maximal number
of children that a node has.
\end{definition}

\begin{definition} \label{cgraph} (\cite[\S 2]{Raz2009}) {\it A circuit-graph  $G$} is  the underlying  graph $G_\Phi$ of an arithmetic circuit $\Phi$ together with the labels
of all nodes. This is  the entire  circuit, except for the labels of the edges. We call $G = G_\Phi$ {\it the circuit graph} of $\Phi$.  
 The {\it size}   of an arithmetic circuit $\Phi$  is defined  to be the number of edges in $\Phi$, and is denoted by $Size(\Phi)$. The {\it depth} of a circuit $\Phi$ is defined to be the length of the longest directed path in $\Phi$, and is denoted by $Depth(\Phi)$.
 The {\it circuit size}  $L(P)$ of an $m$-tuple  $P$ of  polynomials $g^1, \cdots, g^m \in \F[x_1, \cdots, x_n]$ is  the minimal size of an arithmetic circuit computing $P$.  
\end{definition}

\begin{definition}\label{def:syntax} For a circuit-graph $G$, we define {\it the syntactic-degree of a node in $G$} inductively as
follows \cite[\S 2]{Raz2009}. The syntactic-degree of a simple gate is 0,
and  the syntactic-degree of an input-gate is 1. The syntactic-degree of a sum-gate is the
maximum of the syntactic-degrees of its children. The syntactic-degree of a product-gate is
the sum of the syntactic-degrees of its children.
For an arithmetic circuit $\Phi$ and a node $v \in \Phi$, we define the syntactic-degree of $v$ to
be its syntactic-degree in the circuit-graph $G_\Phi$. The degree of a circuit is the maximal
syntactic-degree of a node in the circuit.
\end{definition}

\begin{definition}\label{def:normal} (\cite[Definitions 2.1, 2.2]{Raz2009}) 
A  circuit-graph  $G$ is called {\it  homogeneous}, iff  for every arithmetic circuit $\Phi$  such that   $G = G_\Phi$  and every gate $v$ in $\Phi$,  the polynomial computed by the gate $v$ is homogeneous.
Further, we  say that a homogeneous   graph  is {\it in normal form}, if it satisfies the following conditions.
\begin{enumerate}
\item There is no simple gate.
\item  All edges from the input-gates are to sum-gates.
\item All output-gates  are sum-gates.
\item The gates of $G$ are alternating. That is,  if $v$ is a product-gate  (resp.a sum-gate)  and $(u,v)$ is an edge, then $u$ is  a sum-gate (resp. a product-gate  or  an input-gate.)
\item The in-degree of  every product-gate is exactly 2.
\item The out-degree of every sum-gate is at most 1.
\end{enumerate}
We say that an arithmetic circuit  is  {\it in a normal-homogeneous  form}, if the circuit graph  $G_\Phi$ is in a normal-homogeneous form.
\end{definition}

Let $N(\Phi)$  denote  the number  of gates in $\Phi$.  

\begin{theorem}\label{thm:norm}(cf. \cite[Proposition 2.3]{Raz2009})  Let $\Phi$ be an  arithmetic circuit  of size  $s$ that  computes an $m$-tuple  $P$ of homogeneous  polynomials $g_1, \cdots, g_m \in Pol^r_{hom}(\F^n)$ where $r\ge 1$.
   Then there exists an arithmetic circuit $\Psi$ for the polynomials $g_1, \cdots, g_m$  such that $\Psi$ is in a normal homogeneous form  with  $N (\Psi) < 16s(r+1)^2 +5m +4n$. 
\end{theorem}

\begin{proof} Theorem \ref{thm:norm}  is  almost identical  with  \cite[Proposition 2.3]{Raz2009}   except that   Raz  assumed  that  $m = n$. 
 The  proof   presented here   uses  the Raz algorithm in  the proof of \cite[Proposition 2.3]{Raz2009}. The Raz algorithm transforms   an arithmetic  circuit $\Phi$ that computes $P$ into   an arithmetic circuit  $\Psi$   in  normal homogeneous  form  which also computes $P$ and, moreover, satisfies the condition of Theorem  \ref{thm:norm}.
 We shall write  the proof  of Theorem \ref{thm:norm} in detail, since  it will be needed  for the proof  of  Proposition  \ref{prop:universal} later. 
 
{\it Step 1}. If   a (sum- or product-) gate in  $\Phi$  has in-degree  1, then  we remove  its   and  connect  its  only child directly to  all its  parents.  The size 
 of the new  circuit   is less than the   size of the old circuit. Hence we can assume that $\Phi$ has {\it no  gate  of in-degree 1}.
(This property  is necessary for the next step and   needs not be preserved   under   later steps). 

{\it Step 2}. We transform $\Phi$ to $\Phi_1$, which satisfies  {\it the condition (5)}  of Definition \ref{def:normal},  by  replacing  any product-gate of in-degree larger than 2 with a tree of product-gates of in-degree 2  such that  each new born  product-gate has out-degree  one,  and by replacing any sum-gate of in-degree larger than 2 with a tree of sum-gates of in-degree 2  such that each new born sum-gate has out-degree one.  It is easy to  check that $Size(\Phi_1) \le 2s$. %

{\it Step 3}.  We transform $\Phi_1$ to $\Phi_2$    such that $G_{\Phi_2}$ also {\it  satisfies  the condition (5), and moreover, is homogeneous}. The nodes  of $\Phi_2$ are  obtained  by  splitting  each node $v\in \Phi_1$  into $(r+1)$ nodes $v_0, \cdots, v_r$, where the node $v_i$ computes the homogeneous part of degree $i$ of the polynomial computed by the node $v$. We ignore monomials  of degree larger than $r$. 
If the original node $v\in \Phi_1$  is    a  sum-gate,    we replace   the  sub-circuit in $\Phi_1$ connecting $v$ with its children $u ^1,\cdots, u^t$ by  the circuits  that compute $v_i = u_i^1 + \cdots + w_i^t$  for all $i \in [0, r]$.  
If $v\in \Phi_1$ is  a product-gate,  we replace the  sub-circuit in $\Phi_1$ connecting $v$ with its children $u^1, u^2$ by  the sub-circuits    that compute $v_i = \sum_{j = 0}^i u_j^1 \times u^2_{i-j}$  for all $ i \in [0, r]$. Clearly $\Phi_2$  also computes  $P$, moreover  $\Phi_2$ is homogeneous, satisfies  the condition (5)  in Definition \ref{def:normal}.
 By the  construction
  
\begin{equation}
Size(\Phi_2) \le r(r+1) Size (\Phi_1) \le 2s(r+1)^2.\label{eq:circ1}
\end{equation}
{\it Step 4}. We transform $\Phi_2$  to  a homogeneous  circuit $\Phi_3$ which computes $P$  and satisfies   {\it the conditions (1), (5)}  in Definition \ref{def:normal} by removing  every node of  syntactic-degree 0 as follows.  Let  $u\in \Phi_2$ be a  node of syntactic degree 0.
We  assume that  $u$ has out-degree  at least 1, otherwise  we  can remove $u$ without affecting the  functionality  of the circuit.  Let $v$ be  a parent of $u$.  If $v$ is a sum-gate, noting that $\Phi_2$ is homogeneous, $v$   computes a field element $\alpha_v$. Then we  replace  the sub-circuit computing $v$  from its  children  by  a simple gate  and    label the  corresponding   edge   by $\alpha_v$. If $v$ is a product-gate,  then $v$ has  the only  two children $u$ and $w$, 
so we replace the  sub-circuit   consisting  of  $v$ together with all edges connecting with $v$  by   edges with  appropriate   label  connecting  $w$    with the  parents of $v$.  Repeating this process    we get the  desired  circuit $\Phi_3$  with   no newly created   gate    and $Size(\Phi_3) \le Size(\Phi_2)$. 

{\it Step 5}. We transform $\Phi_3$  to a homogeneous  circuit $\Phi_4$  which computes $P$ and satisfies {\it the  conditions (1), (5) and (4)}. This is done as follows.  For any edge $(u,v)$ such that $ u, v$ are both  product-gates
we add a dummy sum-gate in between them. For any edge $(u,v)$  such that  $u, v$  are both  sum-gates  we connect all the children of $u$ directly to $v$.
Clearly $Size(\Phi_4)\le 2 \, Size(\Phi_3)$. 

{\it Step 6}. We transform $\Phi_4$  to  a homogeneous  circuit $\Phi_5$  which  computes $P$ and satisfies  {\it the conditions (1), (5), (4)  and  (3)} by connecting 
every product output-gate  to a new dummy sum-gate.  
Clearly
$$Size(\Phi_5) \le Size(\Phi_4) + m\le 2 \, Size(\Phi_2) +m.$$

{\it Step 7}. We transform $\Phi_5$ to  a homogeneous  circuit $\Phi_6$ which computes $P$  and satisfies {\it the conditions (1), (5), (4), (3) and (2)} by  adding a dummy sum-gate
in the middle of  any edge  from an input-gate to a product gate.  This  step  also transforms a formula $\Phi_5$ to  the formula $\Phi_6$.
Clearly 
$$Size(\Phi_6) \le 2 Size(\Phi_5) -m \le 4 \, Size (\Phi_2) +m .$$
{\it Step 8}. We transform $\Phi_6$ to   a homogeneous  circuit $\Phi_7$ which computes $P$  and satisfies all the conditions in Theorem \ref{thm:norm}  by duplicating  $q$-times any sum-gate of out-degree $q>1$. The resultant $\Phi_7$ may have  large  circuit size,  since
we do not have  a control  over the  number of edges  outgoing  from product-gate. Thus  we are  restrict  ourself with   the following  estimate
$$N(\Phi_7) \le 3 N(\Phi_6) \le  2 (Size (\Phi_6) + n +m)\le   8\, Size(\Phi_2) +5m +4n. $$
Taking into account (\ref{eq:circ1}), 
this completes  the proof of Theorem \ref{thm:norm}.
\end{proof}

\begin{theorem}\label{prop:universal} cf. \cite[Proposition 2.8]{Raz2009}  Assume that  a quadruple $(s,r,n, m)$ satisfies  $n,m\le s$, $1\le r$. Then  there is  a circuit-graph $G_{s,r,n,m}$, in a normal-homogeneous form 
that is universal for $n$-inputs and $m$-outputs  circuits  of size $s$ that computes  homogeneous  polynomials  of degree $r$, in the following sense.

Let $\F$ be a field. Assume that a $m$-tuple   $P : = (g_1, \cdots, g_m) \in (Pol^r_{hom}(\F^n))^m$ is of circuit size $s$. Then there exists an arithmetic  circuit  $\Psi$ that computes $P$  such that $G_\Psi = G_{s,r,n,m}$.

Furthermore,  the number of the edges  leading to the sum-gates in $G_{s,r,n,m}$  is less than $256\cdot s^2(r+2)^6$.
\end{theorem}

\begin{proof}  Theorem  \ref{prop:universal}  differs  from Proposition 2.8 in \cite{Raz2009} only in two instances.   Firstly,  Raz assumed that $m = n$. Secondly, 
we   have  an estimate on the number of the edges  leading to the sum-gates in $G_{s,r,n,m}$. This estimate, combined with  Remark \ref{rem:edges} below, yields  a better  lower bound  for the circuit size  of  partially      homogeneous  polynomials in considerations, see Remark \ref{rem:cor38}. 
 The idea  of the  proof of Theorem \ref{prop:universal}, due  to Raz \cite{Raz2009}, is to produce  a circuit-graph $G_{s,r,n,m}$ with   sufficient nodes and edges  so that the circuit-graph of  any normal-homogeneous circuit $\Phi$ computing
$P$   can be  embedded into $G_{s,r,n,m}$.   

Set $N := N(s,r,n,m) = 16\, s(r+1)^2 + 5m +4n$.

The  circuit-graph $G_{s,r,n,m}$ is constructed based on Theorem \ref{thm:norm}  as follows.  First  we describe how to  partition  the nodes of $G_{s,r,n,m}$ into $2r$ levels.
\begin{itemize}
\item The level-$1$  contains  $n$ input-gates, and the last level-$2r$  contains $m$ output-gates.
\item  For every $i \in \{ 2, . . . , r\}$, the level-$2i$ contains $N$ sum-gates of syntactic-degree $i$.
\item  For every $i \in \{ 2, . . . , r\}$, the level-$(2i-1)$ contains product-gates of syntactic-degree $i$.
\item  Every   product-gate  in level-$(2i-1)$  is  assigned  a  type $j\in [1, i-1]$.
\item  For   each pair $(i, j)$ such that $1\le j \le i-1\le r-1$  there are exactly  $N$  product-gates  of  syntactic degree $i$ and of type $j$.
\end{itemize}


Now  describe  the edges of the  circuit-graph $G_{s,r,n,m}$. First  we connect each sum-gate  in level-$(2i)$  with all  product-gates in level-$(2i-1)$.
Then we  connect   each  product-gate of type $j$  in level-$(2i-1)$ with    one sum-gate in level-$(2j)$  and  with one sum-gate
in level-$(2i-2j)$  inductively  using  an ordering  the set $\{(i, j)|\:1\le j \le i-1\le r-1\}$,   such that   the out-degree of every sum-gate is at most 1.

    Clearly  the constructed  circuit-graph $G_{s,r,n,m}$  is in  a normal-homogeneous form.


Let $P : = (g_1, \cdots, g_m) \in (Pol^r_{hom}(\F^n))^m$  have   a circuit  size $s$  and $\Psi $ an arithmetic  circuit  in normal homogeneous form that computes $P$ as  described in the proof in Theorem \ref{thm:norm}; in particular  $Size(P) \le 16\, s (r+1)^2 + 5m +4n$. 
We  will show how to embed  $G_\Psi$ into $G_{s,r,n,m}$. 

Since  $N(\Psi)\le N =  N (s,r,m,n)$,  we can embed  all the  product-gates of syntactic degree $i$   and of type $j$ of the circuit graph $G_{\Psi}$ into the  product-gates   of  type $j$  in level $(2i-1)$ of  $G_{s,r,n,m}$.   Since   the in-degree  of  each  product-gate  in $\Psi$ as well as in 
$G_{s,r,n,m}$   is two,   we   embed  all the   sum-gates  of $\Psi$ into  the sum-gate  of $G_{s,r,n,m}$ so that  the  edges   leading  to the  product-gates in $\Psi$   are  also   edges leading  to the  product-gates  in $G_{s,r,n,m}$.  Since   each  sum-gate   in level-$(2i)$ is connected with each product-gate  in  level-$(2i-1)$   the  edges leading to the  sum-gates in $\Psi$  can be  embedded  into  the  edges leading to the sum-gates in $G_{s,r,n,m}$.
 This completes the proof of   the first  assertion  of  Theorem  \ref{prop:universal}. 
 
 To prove  the last assertion of  Theorem \ref{prop:universal} we note that  for $i \in [1, r]$ there  are  at most $(r-1)N$   product-gates   on  level-$(2i-1)$ and  there  are exactly  $N$ sum-gates   on level-$(2i)$ of the  universal  circuit-graph  $G_{s,r,n,m}$. Hence   the total  number  of the edges leading to the sum-gates in
 $G_{s,r,n,m}$   is  at most
$r \cdot  (r-1)N\cdot N <  256\, s^2 (r+2)^ 6$. This completes  the proof   of Theorem \ref{prop:universal}.
\end{proof}

 



\begin{remark} \label{rem:edges}  (\cite[3.2]{Raz2009}) Assume that  $\Phi$ is a normal homogeneous arithmetic circuit  that computes  a $m$-tuple  $P \in (Pol^r_{hom}(\F^n)^m)$.  Then  there is an arithmetic circuit $\Psi$ of the same circuit-graph  as $\Phi$  that computes $P$  such that    the label of  any edge leading to a product-gate in $\Psi$  is 1.
\end{remark}

\section{$(s,r)$-weakly elusive  polynomial mappings}\label{sec:2}

In this section we introduce the notion of an $(s,r)$-weakly elusive  polynomial mapping (Definition \ref{def:eluh}), which is slightly weaker than  the notion of  an $(s,r)$-elusive   polynomial mapping introduced by
Raz (Example \ref{ex:1}), see  also  Remark \ref{rem:eva} in Section 3  for motivation. Then we  show  how this notion  is useful for obtaining  lower bounds for  the circuit size  of elements in a polynomial family of $m$-tuples  of homogeneous  polynomials (Proposition \ref{prop:eluh}, Corollary \ref{cor:his}).  The key geometric  structures here are  polynomial families
of $m$-tuples of homogeneous  polynomials of equal degree (Definition \ref{def:polf}).  


\begin{definition}\label{def:eluh} (cf. \cite[Definition 1.1]{Raz2009})  A polynomial  mapping $f: \F^n \to \F^m$  is called  {\it $(s,r)$-weakly  elusive},  if its image
does not belong to the image  of any homogeneous  polynomial mapping $\Gamma: \F^s \to \F^m$ of degree   $r$.
\end{definition}

This definition differs  from   the Raz definition \cite[Definition 1.1]{Raz2009}  only in   the requirement  that $\Gamma$  must be homogeneous.  This is a minor  difference, as we  will see in the example below, but  it will be technical  simpler  in   some situations. 

\begin{example}\label{ex:1}  \begin{enumerate}
\item Any $(s,r)$-elusive polynomial  mapping is $(s,r)$-weakly elusive.
\item The curve $(1, x, \cdots, x^m) \in \R^{m+1}$ is $(m, 1)$-weakly elusive, since its  image does not belong to any hyper-surface  through the origin  of $\R^{m+1}$. On the other hand,   this curve is not $(m,1)$-elusive, since  it lies on the affine  hyper-surface $x_1 = 1$  in $\R^{m+1}$.
\item  If $f : \F^n \to \F^m$ is $(s+1, r)$-weakly elusive, then  $f$  is $(s, r)$-elusive.
\end{enumerate}
\end{example}


The notion of $(s,r)$-weakly elusive  polynomial mappings  aims to verify, whether elements in a  polynomial family of  $m$-tuples of homogeneous polynomials of
equal degree  have uniformly bounded circuit size.

Given a set $S$ of  variables $x_1, \cdots, x_l$  we denote by  $Pol ^r(\F\la S\ra)$ (resp. $Pol^r_{hom}(\F\la S \ra)$)  the space of
polynomials (resp  homogeneous polynomials) of degree $r$ in variables $x_1, \cdots, x_s$ over $\F$.


\begin{definition}\label{def:polf} A family $P_\lambda \in (Pol ^r_{hom} (\F^n))^m,\, \lambda \in  \F^k$,  will be called {\it  a polynomial family  of $m$-tuples of  homogeneous
polynomials of equal degree}, if  there   exists  a polynomial mapping $f: \F^k \to  \F^N = (Pol^r_{hom}(\F^n))^m$,  such that $P_\lambda =  f(\lambda)$ for all $\lambda \in \F^k$.  The polynomial mapping $f$ will be called {\it associated  with the  family $P_\lambda$}.
\end{definition}

\begin{proposition}\label{prop:eluh} Let  $Z = \{ z_1, \cdots, z_n \}$  be a set of variables  and  $1\le n,m\le s$. 
Assume that  $P_\lambda \in  (Pol ^r _{hom}(\F(\la Z \ra) ) ^m, \, \lambda \in \F^k,$    is  a polynomial family  of  $m$-tuples of homogeneous polynomials in $n$ variables of degree $r$ such that  for each $\lambda \in \F^k$ the circuit size of  $P_\lambda$ is  at most $L$.
Then  the associated   polynomial  mapping $f$  is not  $(s, 2r-1)$-weakly  elusive  for any $s \ge s_0:=  256\cdot L^2 \cdot ( r+2)^6$.
\end{proposition}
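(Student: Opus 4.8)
The plan is to reduce the statement to the two structural results established in the appendix: the homogeneous normal-form theorem (Theorem~\ref{thm:norm}) and the existence of a small universal circuit-graph (Proposition~\ref{prop:universal}), together with a degree count for the polynomial mapping attached to such a graph. Unwinding Definition~\ref{def:eluh}, to say that $f$ is \emph{not} $(s,2r-1)$-weakly elusive is to say that $\operatorname{Image}(f)$ is contained in $\operatorname{Image}(\Gamma)$ for some homogeneous polynomial mapping $\Gamma\colon\F^s\to\F^N=(Pol^r_{hom}(\F^n))^m$ of degree $2r-1$. So the whole proof amounts to producing one such $\Gamma$, and it will be the map attached to a universal circuit-graph.

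First I would fix, for each $\lambda\in\F^k$, a circuit of size $\le L$ computing the $m$-tuple $P_\lambda$, and apply Theorem~\ref{thm:norm} to replace it by a homogeneous circuit in normal form whose size is bounded by an explicit function of $L$ and $r$ not depending on $\lambda$. Then I would invoke Proposition~\ref{prop:universal} (this is where the hypothesis $1\le n,m\le s$ is used, guaranteeing there are enough input and output nodes within the budget) to obtain a \emph{single} circuit-graph $G$ with at most $s_0=64\,L^2r^3$ edges such that for every $\lambda$ some labelling $\rho_\lambda\in\F^{s_0}$ of the edges of $G$ by field constants makes $(G,\rho_\lambda)$ compute $P_\lambda$. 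Letting $\Gamma_G\colon\F^{s_0}\to\F^N$ be the map sending a labelling $\rho$ to the $m$-tuple of homogeneous degree-$r$ polynomials computed by $(G,\rho)$, we get by construction $\operatorname{Image}(f)=\{P_\lambda:\lambda\in\F^k\}\subseteq\operatorname{Image}(\Gamma_G)$.

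The substantive step is to check that $\Gamma_G$ is a homogeneous polynomial mapping of degree precisely $2r-1$. I would prove this by induction on the gates of the normal-form graph, tracking the degree in $\rho$ of the coefficient vector of the homogeneous polynomial of degree $d$ computed at a gate: an addition gate raises that degree by $1$, while a multiplication gate combining a degree-$d_1$ computation with a degree-$d_2$ computation adds the two degrees, and in the normal form $d_1+d_2=d$ at every multiplication. Starting from degree $1$ at the linear forms ($d=1$), the induction $2d-1 = 1 + (2d_1-1) + (2d_2-1)$ gives degree $2d-1$, hence $2r-1$ at the $m$ output gates; homogeneity in $\rho$ of that exact degree propagates along the same induction. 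This bookkeeping, together with the uniformity of $G$ in $\lambda$, is the heart of the matter, and it is where the improved appendix estimates (rather than Raz's) are what make the constant $64$ and the exponents in $s_0=64\,L^2r^3$ come out as stated.

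Finally, for any $s\ge s_0$ I would pull $\Gamma_G$ back along the coordinate projection $\F^s\to\F^{s_0}$ to obtain a homogeneous polynomial mapping $\Gamma\colon\F^s\to\F^N$ of degree $2r-1$ with $\operatorname{Image}(\Gamma)=\operatorname{Image}(\Gamma_G)\supseteq\operatorname{Image}(f)$; by Definition~\ref{def:eluh} this says exactly that $f$ is not $(s,2r-1)$-weakly elusive. I expect the only real obstacle to be the degree-and-homogeneity accounting in the normal form and the extraction of the uniform graph $G$ from Proposition~\ref{prop:universal}; once those are in hand the rest is formal.
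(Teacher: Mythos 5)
Your proposal is correct and follows essentially the same route as the paper: substitute variables for the edge labels of the universal circuit-graph $G_{L,r,n,m}$ of Proposition~\ref{prop:universal}, verify by induction through the alternating normal form that the resulting mapping $\Gamma_G$ is homogeneous of degree $2r-1$, and conclude that its image contains every $P_\lambda$. The one point to state explicitly is that $s_0=64L^2r^3$ bounds only the edges leading to sum-gates, the remaining edges being fixed to the label $1$ via Remark~\ref{rem:edges} --- which is exactly what makes your degree count ($+1$ per sum-gate edge, $+0$ per product-gate edge) come out to $2r-1$.
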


\begin{proof}  Let $s$ be  an integer as in Proposition \ref{prop:eluh}. To prove  Proposition \ref{prop:eluh} it suffices to show the existence of   a homogeneous polynomial mapping of degree $(2r-1)$
$$\Gamma_G : \F^s \to (Pol^r_{hom}(\F\la Z\ra))^m$$
such that      
\begin{equation}
f (\F^k)\subset \Gamma_G (\F^s).\label{eq:eluh1}
\end{equation}

We shall construct a homogeneous  polynomial  mapping $\Gamma_G$   satisfying (\ref{eq:eluh1}) with help of Proposition \ref{prop:universal}. 
 By Theorem \ref{prop:universal}, the universal  circuit-graph  $G_{L,r,n,m}$   has at most  $s_0$  edges  leading  to  the sum-gates. 
We label  these edges with $y_1, \cdots, y_{\bar s}$, where $\bar s \le s_0$. We label the  other edges  of $G_{L,r,n,m}$  with  the field element $1$, see Remark \ref{rem:edges}.
Now  we define $\Gamma_G$  to be   the  polynomial  mapping in the variables $y_1, \cdots, y_s$   such that 
$$\Gamma_G(\alpha_1, \cdots, \alpha_s) = (g_1, \cdots, g_m)$$
where $(g_1, \cdots, g_m)$ are  the $m$ output-gates  of the  circuit $\Phi_{G_{L,r,n,m}}$ obtained  from $G_{L,r,n,m}$ by replacing   the label $y_i$ with the   field element $\alpha_i \in \F$  for  all $i \in [1, \bar s]$. In particular, $\Gamma_G$ depends   only on 
$\bar s$ variables.

\begin{lemma}\label{lem:gammag} (cf. \cite[Proposition 3.2]{Raz2009})  $\Gamma _G$ is a homogeneous polynomial  mapping of degree  $2r-1$.
\end{lemma}
\begin{proof}  We apply the  argument in the Raz proof  of \cite[Proposition 3.2]{Raz2009}, which is  a  special case  of Lemma \ref{lem:gammag} with $m = n$. For a node $v \in G_{L,r,n,m}$ denote  the polynomial $g_v\in \F [z_1, \cdots, z_n, y_1,\cdots, y_s]$ that is computed  by the node $v$ and is regarded as polynomial in $[z_1, \cdots, z_n]$ with coefficients in  $\F[y_1, \cdots, y_s]$. 
 If $v$ is a product-gate, with children $v_1 , v_2$ (that are sum-
gates), then, by induction, the coefficients in the polynomials $g_{v_1} , g_{v_2}$ are homogeneous
polynomials of degree $2r_{v_1} -1$, $2r_{v_2} -1$, respectively, (in the labels $y_1, \cdots ,y_s$). By Remark \ref{rem:edges}
$(v_1 , v)$ and $(v_2 , v)$ are labelled by 1, the coefficients in the polynomial $g_v$ are homogeneous
polynomials of degree $2r_{v_1} - 1 + 2r_{v_2} - 1 = 2r_v - 2$ (in the labels $y_1, \cdots ,y_s$).
If $v$ is a sum-gate, then, by induction, the coefficients in the polynomial $g_u$, for every
child u of v, are homogeneous polynomials of degree $2r_u - 2 = 2r_v - 2$ (in the labels
$y_1, \cdots , y_s$). Since the edge $(u, v)$ is labelled by an element of $\{ y_1 , \cdots , y_s \}$, the coefficients in
the polynomial $g_v$ are homogeneous polynomials of degree $2r_v - 1$ (in the labels $y_1 , \cdots, y_s$).
\end{proof}
 
By the assumption of Proposition \ref{prop:eluh}, for any $\lambda \in \F^k$, the  circuit size of $f(\lambda)$ is at most $L$. Taking into account Theorem \ref{prop:universal},     there exists $\alpha \in \F^s$ such that
$f (\lambda) = \Gamma_G(\alpha)$.  This proves (\ref{eq:eluh1}) and completes the proof  of Proposition \ref{prop:eluh}.
\end{proof}

\begin{corollary}\label{cor:his} 
 Let  $P_\lambda \in (Pol^r _{hom} (\F^n))^m$, $\lambda \in \F^k,$  be a  polynomial family  of $m$-tuples of homogeneous polynomials of degree $r$
   and $f: \F^k \to (Pol ^r_{hom} (\F^n))^m$  its  associated polynomial mapping.
Assume that $f$ is  $(s, 2r-1)$-weakly elusive.  Then
$P_\lambda$  has a member  with circuit size  greater than  or equal $\frac{\sqrt s}{16 (r+2) ^3}$.  
\end{corollary}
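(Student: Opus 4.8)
The plan is to read Corollary~\ref{cor:his} as the contrapositive of Proposition~\ref{prop:eluh}, with one small extra ingredient: the circuit size of a polynomial is a non-negative integer, which is what lets one upgrade a ``no member exceeds the bound'' hypothesis into a genuine uniform bound valid simultaneously for all (possibly infinitely many) members of the family.

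First I would argue by contradiction. Suppose no member of the family $P_\lambda$ has circuit size $\ge \frac{\sqrt s}{8 r^{3/2}}$; then every member has circuit size strictly less than $\frac{\sqrt s}{8 r^{3/2}}$, hence at most the integer $L := \lceil \frac{\sqrt s}{8 r^{3/2}}\rceil - 1$, which satisfies $L < \frac{\sqrt s}{8 r^{3/2}}$. Squaring this and multiplying by $64 r^{3}$ gives $64 L^{2} r^{3} < s$; in other words the threshold $s_0 = 64\cdot L^2\cdot r^3$ of Proposition~\ref{prop:eluh} satisfies $s_0 \le s$.

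Next I would apply Proposition~\ref{prop:eluh} to the family $P_\lambda$ with this value of $L$: since $P_\lambda$ is a polynomial family of $m$-tuples of homogeneous polynomials of degree $r$ whose members all have circuit size at most $L$, the associated mapping $f$ is not $(s', 2r-1)$-weakly elusive for any $s' \ge s_0$; taking $s' = s$ shows that $f$ is not $(s, 2r-1)$-weakly elusive. This contradicts the hypothesis of the corollary, so some member of $P_\lambda$ must have circuit size $\ge \frac{\sqrt s}{8 r^{3/2}}$ — in fact strictly greater, which is marginally stronger than stated.

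I do not expect any genuine obstacle here: all of the real work is contained in Proposition~\ref{prop:eluh} (and, behind it, in Proposition~\ref{prop:universal} and Theorem~\ref{thm:norm} of the appendix), which I am free to use. The only points that need a moment's care are (i) the passage to a uniform integer bound $L$ sketched above, which is exactly where integrality of circuit size enters; (ii) the elementary equivalence $L < \frac{\sqrt s}{8 r^{3/2}} \iff 64 L^2 r^3 < s$; and (iii) checking that the standing assumption $1 \le n,m \le s$ of Proposition~\ref{prop:eluh} is in force in the setting of the corollary.
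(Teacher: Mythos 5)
Your proof is correct and is exactly the argument the paper intends: the corollary is stated as the immediate contrapositive of Proposition \ref{prop:eluh} (the paper gives no separate proof), and your passage to the integer bound $L$ with $64L^2r^3 < s$ is the right way to make that contrapositive precise.
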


 Proposition \ref{prop:eluh}    crystallizes some arguments  in Raz's  proof of \cite[Proposition 3.7]{Raz2009}.
In Proposition \ref{prop:natu}  below we shall see that  for each multivariate  partially homogeneous polynomial $\tilde f$  there are many  polynomial  families 
$P_\lambda (\tilde f)$ of $m$-tuples
of homogeneous  polynomials of equal degree associated
with $\tilde f$ such that the   circuit size  of any member  in the family $P_\lambda (\tilde f)$ is  bounded   by the circuit size of $\tilde f$. Then we
can apply  Proposition \ref{prop:eluh}, or its  equivalent version Corollary \ref{cor:his},   for  estimating   from below the
circuit size  $L(\tilde f)$.

\section{How to prove  the $(s,r)$-weak elusiveness}\label{sec:3}
In this section we assume that  $\F$ is a field of characteristic 0, or the size of $\F$ is sufficiently large, so that $Pol^k_{hom} (\F^n, \F^m) = (Pol ^k_{hom}(\F^n))^m$. 

Given  $f\in Pol^k(\F^n, \F^m)$  and  two numbers $s,r$,  it is generally hard  
to know whether $f$  is    $(s,r)$-weakly elusive  or $(s,r)$-elusive. In this section  we propose  some algebraic  methods to establish   the $(s,r)$-(weak)  elusiveness of $f$ (Proposition \ref{prop:counting},  Remark \ref{rem:nonhom}, Corollaries \ref{cor:counting}, \ref{cor:keluh}, Examples \ref{ex:veronese}, \ref{ex:raz1}).  Under ``algebraic  methods" (resp. 
``algebraic  characteristics")  we  mean   operations    on $f$ (resp. properties like the dimension of vector spaces  associated  with $f$), which are  related with  techniques  developed  in commutative algebra.  We show that, for  appropriate parameters $(s,r,n,m, p)$, the subset of $(s,r)$-(weakly) elusive  homogeneous polynomial mappings is everywhere dense with respect to the Zariski topology in the  space $Pol^p_{hom}(\F^n, \F^m)$ (Theorem \ref{thm:exist}).   We also pose some problems in  commutative algebra (Problems 1, 2, 3)   whose solutions  would advance the proposed methods.

\subsection{Hilbert functions and $(s,r)$-weak elusiveness} 
Let  $  \F P^{m-1}$  denote the  projective space  of   dimension $(m-1)$  over   field $\F$, that is $\F  P^{m-1} = (\F^m \setminus \{ 0 \}) /(\F \setminus \{ 0\})$.
Elements  of $\F  P^{m-1}$  are denoted by $[x_1, \cdots, x_m]$, where $x_i \in \F$. 
For $\Gamma = (\Gamma ^1, \cdots, \Gamma^m)  \in  (Pol^r_{hom} (\F^s))^m   =  Pol ^r _{hom}(\F^s, \F^m)$    and $g \in  Pol ^*_{hom}(\F ^m)$  let
$$\Gamma ^*(g) : = g(\Gamma ^1, \cdots, \Gamma ^m) \in  Pol^*_{hom} (\F ^s),$$
$$\Gamma _{pr}:= \{ [\Gamma^1 (\lambda), \cdots, \Gamma ^m (\lambda) ]\in  \F P^{m-1}) |\, \lambda \in \F^s \}.$$
 (Thus $\Gamma_{pr}$ is the  projective variety in $\F P^{m-1}$ that is associated  with  $\Gamma$.)
 
For a  given  quadruple $(s,r,m,d)$  with $s\le m-1$ we set
$$l_{hom}(s,r,m,d) : = \max  \{ \dim  \Gamma ^* (Pol ^ d _{hom} (\F^m))|\, \Gamma \in  (Pol^r_{hom} (\F^s))^m\}.$$

Let\\
- $I^d_{hom} (\Gamma(\F^s))$  denote the space  consisting   of all homogeneous  polynomials of degree    $d$ in the ideal $I(\Gamma(\F^s))$,\\
- $A^d_{hom} (\Gamma)$ - the   quotient space $Pol^d_{hom} (\F^m)/ I^d_{hom} (\Gamma(\F^n))$.

Since $I^d_{hom}(\Gamma(\F^n)) = \ker \Gamma ^ * \cap Pol^d_{hom} (\F^m)$, we have
\begin{equation}
\dim \Gamma^*(Pol^d_{hom} (\F^m)) =  \dim A^d_{hom}(\Gamma).\label{eq:adg}
\end{equation}
Note that $\dim A^d_{hom}(\Gamma)$
is   equal  to the  value   $h_{\Gamma}(d)$ of {\it the  Hilbert  function  of the variety $\Gamma_{pr}$ at $d$}, 
 see for instance  \cite[Lecture 13]{Harris1993}.   
Hence it  follows from (\ref{eq:adg})
\begin{equation}
l_{hom}(s, r, m, d) = \max \{ h_{\Gamma} (d) |\, \Gamma \in Pol ^r _{hom}(\F^s, \F^m)\}.\label{eq:hil1}
\end{equation}
Furthermore, we denote by $Pol ^{rd} _{hom} (\F^s)$ the linear space   of   all homogeneous polynomials $g$ of   degree    $rd$ in $(x_1, \cdots, x_s)$.
Then  $\Gamma ^* (Pol ^ d _{hom} (\F^m)) \subset Pol ^{rd} _{hom} (\F^s)$ for all $\Gamma \in  Pol^r_{hom} (\F^s, \F^m)$. Hence we   obtain
\begin{equation}
l_{hom}(s, r, m, d)\le  \dim  Pol ^{rd} _{hom} (\F^s) = \binom{rd+s -1}{rd}.\label{eq:hil2}
\end{equation}

{\bf Problem  1.}  Find      upper bounds for  $l_{hom}(r,s,d, m)$ that are better than  (\ref{eq:hil2}).
Equivalently, we need to find  better upper  bounds  for $h_\Gamma(d)$ for $\Gamma \in Pol ^r _{hom}(\F^s, \F^m)$.

Now assume that  $f$  is a homogeneous polynomial  mapping    
from $\F ^n$  to $\F^m$, where $m\ge n+1 \ge  2$.

\begin{proposition}\label{prop:counting} Let $f$  be a homogeneous polynomial  mapping    
from $\F ^n$  to $\F^m$, where $m\ge n+1 \ge  2$. If there  exist   $s,r, d \ge 1$  
such that $h_f (d) \ge  l_{hom}(s,r,m,d) +1$,
then $f$ is $(s, r)$-weakly   elusive.  
\end{proposition}

\begin{proof} Assume that $f$ satisfies the  condition  of Proposition  \ref{prop:counting}. We will show that   for any homogeneous  mapping
$\Gamma: \F^s \to \F^m$ of degree $r$ the image of $f$  does not lie  on the image of $\Gamma$. Assume the opposite, i.e. there exists  a  homogeneous
 mapping $\Gamma:\F^s \to \F^m$ of degree  $r$ such that $f(\F^n) \subset \Gamma(\F^s)$. Then
\begin{equation}
I_{hom}(\Gamma (\F ^s)) \subset  I_{hom}(f(\F^n)).\label{eq:ideal1}
\end{equation}
 Let $I ^{\perp, d}_{hom} (f(\F^n))$ be   a  complement  of the  subspace $I^d_{hom}(f(\F^n))$ in $Pol^d_{hom} (\F^m)$. 
  Since $\Gamma$  is  homogeneous of degree
$r$, using (\ref{eq:hil1}), we have
\begin{equation}
\dim \Gamma ^* (I^{\perp, d}_{hom}(f(\F^n))) \le \dim \Gamma^ *(Pol^d_{hom} (\F ^m))\le l_{hom}(s,r,m,d).\label{eq:compare1}
\end{equation}
Taking into account $\ker \Gamma^ * \cap Pol_{hom}^*(\F^ m) = I_{hom}(\Gamma(\F^s))$, (\ref{eq:ideal1})  implies that
\begin{equation}
\dim \Gamma ^* (I^{\perp, d}_{hom}(f(\F ^n)))  = \dim  A ^d_{hom}(f)= h_f(d).\label{eq:compare2}
\end{equation}
Clearly (\ref{eq:compare1})  and (\ref{eq:compare2}) contradict the assumption of our Proposition.  This proves
 that $f$  is $(s,r)$-weakly elusive.
\end{proof}


\begin{remark}\label{rem:nonhom}  The above arguments   also  apply  to  the study of   $(s,r)$-elusive functions. We  set
$$l(s,r,m,d) : = \max  \{ \dim  \Gamma ^* (Pol ^ d _{hom} (\F^m))|\, \Gamma \in  (Pol^r (\F^s))^m\}.$$
The   argument   in the proof  of Proposition  \ref{prop:counting}  implies the following  assertion. Assume that $f$ is a homogeneous
polynomial mapping  from $\F^n$ to $\F^m$. If   for some $s,r, d \ge 1$ we have  $h_f (d) \ge  l (s,r, m, d) +1$  then $f$  is $(s,r)$-elusive.
\end{remark}

We obtain immediately from Proposition \ref{prop:counting}  and   Remark \ref{rem:nonhom} the following

\begin{corollary}\label{cor:counting} Assume that  $f$ is a homogeneous polynomial mapping  from $\F^n$ to $\F^m$.

1. If for some $d,s,r \ge 1$  
we have 
$ h_f(d) \ge \binom{rd+s -1}{rd} $,
then $f$ is $(s, r)$-weakly   elusive. 

2. If for some $d,s,r\ge 1$  
we have 
$ h_f(d) \ge \binom{rd+s}{rd} $,
then $f$ is $(s, r)$-elusive. 
\end{corollary}

Now we redenote $\binom{n-1+k}{k}$  as $b(n-1+k,k)$.

\begin{example}\label{ex:veronese}   Let us consider  the  Veronese  mapping  $\nu_k: \C^n \to \C^{b(n-1 +k)(k)}= Pol^k_{hom}(\C^n)$  of degree $k$:
$$\nu_k(x_1, \cdots, x_n) := (x_1 ^k, x_1^{k-1} x_2, x^{k-2}_1 x_2^2, \cdots, x_n^k).$$
It is known that  $A^d_{hom} (\nu _k)$  is  equal to $Pol^{dk}_{hom}(\C^n)$, see e.g. \cite[Example 13.4]{Harris1993}.
 By Corollary \ref{cor:counting}, $\nu_k$ is $(s, r)$-weakly elusive, 
 if  for some $d$ we have
\begin{equation}
\binom{dk +n -1}{dk} \ge \binom{rd+s-1}{rd} +1 .\label{eq:vero1}
\end{equation}
\end{example}

To  apply  Corollary  \ref{cor:counting} to solving the question  whether  a homogeneous polynomial mapping  $f$ from $\F^n$ to $\F^m$ is
 (weakly)  elusive,  in the first step, we   search
for an intermediate  lower  bound  for the  value $h_f (d)$  of the Hilbert function $h_f$, where $d$  is  some  appropriate integer.
The following Lemma suggests a  way to find   such a lower bound. 

\begin{lemma}\label{lem:lhilbert}  Let  $f$ be  a homogeneous polynomial mapping  from $\F^n$ to $\F^m$.
Assume  that  there exists  a subspace $\Ll \subset  Pol^d_{hom} (\F^m)$
such that $\ker  f^*  \cap \Ll = 0$.   Then $h_f (d) \ge  \dim \Ll$.
\end{lemma}
\begin{proof} As we  have observed above,  
$$h_f(d) = \dim  f^*(Pol^d_{hom} (\F^m)) \ge  \dim   f^* (\Ll) = \dim \Ll.$$
 The last   equality holds
since $\ker  f^*  \cap \Ll = 0$.  This   proves Lemma \ref{lem:lhilbert}.
\end{proof}

\begin{example}\label{ex:raz1}  As  an application of Corollary \ref{cor:counting}  and  Lemma \ref{lem:lhilbert}  we shall explain  Raz's  proof   of  $(s,d)$-elusiveness of functions  constructed  in \cite[Lemma 4.1]{Raz2009}.
For an integer $k$, denote by $[k]$ the set $\{ 1, \cdots , k\}$. Let $m = n^2$. We
identify the set $[m]$ with $[n] \times [n]$ by the lexicographic order. Let $1 \le d\le  (\log_2 n)/100$ be
an integer. Let $d' = 5d$. Let $X= \{ x_{i,j }\}_{i\in [d' ],j\in [n]}$ be a set of $n \cdot d'$ input variables. For every
$(a, b) \in [n] \times [n] = [m]$, define a polynomial
$$ f_{(a,b)} (x_{1,1} , \cdots , x_{d' ,n} ) =\Pi_{i\in [d']} x_{i,a+i\cdot b}$$
where the sum $a + i \cdot b$ is taken
modulo $n$. Let 
$$f = (f_{(1,1)} , f_{(1,2)} , \cdots  , f_{(n,n)} ).$$
Raz  proved  that  the polynomial mapping $f : \F^{n\cdot d'} \to  \F^m$ is $(s, d)$-elusive, where $s = \lfloor n^{1+1/(2d)}\rfloor $.
In his proof  Raz introduced the notion of
{\it a retrievable monomial in $ Pol ^r_{hom}(\F^m)$}, or equivalently, {\it a  retrievable  subset $Q \subset [n]\times [n]$ of size $r$}.
We define  a map $R:   2 ^{ [n] \times [n]} \to  \F[x_{1, 1}, \cdots  x_{d', n}]$ as follows
$$R(Q) =  f_Q : = \Pi_{(a, b) \in Q } f_{a,b} =  \Pi_{(a, b) \in Q }\Pi_{i \in [d']} x_{i, a + i b}.$$
Let $m_Q$  denote the monomial  $\Pi_{(a,b) \in Q} x_{(a,b)} \in Pol ^r_{hom}(\F^m)$. Then   $R(Q) =  f^*(m_Q)$.
A subset $Q\subset  [n]\times [n]$ is called {\it retrievable},  if $R^{-1} ( R (Q)) = Q$, or  equivalently $(f^*) ^{-1} ( f^*(m_Q)) = m_Q$.
  Raz proved the following 
  
 \
 
{\bf Claim  R} \cite[Claim 4.2]{Raz2009} The  set of retrievable   monomials $m_Q$ of degree $ r = \lfloor n ^{1- 1/(2d)} \rfloor$  is at least 
a half of the  set of  all   monomials  in $ Pol ^r_{hom}(\F^m)$.

\

Now  let $\Ll \subset Pol^r_{hom} (\F^m)$   be  generated  by  retrievable   monomials $Q$ of degree $ r = \lfloor n ^{1- 1/(2d)} \rfloor$.
It is not hard  to see  that  $\ker  f^*  \cap \Ll = 0$ \cite[Claim 4.4]{Raz2009}.  Consequently, Lemma  \ref{lem:lhilbert} and Claim R yield
\begin{equation}
h_f (r) \ge \dim \Ll \ge  {1\over 2} \binom{m}{r}.\label{eq:hfr}
\end{equation}
Furthermore, Raz  get the following estimates
\begin{equation}
{1\over 2} \binom{m}{r} \ge  s ^r  > \binom{rd +s }{rd}.\label{eq:rds}
\end{equation}
Using (\ref{eq:hfr}), (\ref{eq:rds})  and  Corollary  \ref{cor:counting}.2
we obtain  the $(s,d)$-elusiveness  of $f$. 
\end{example}

Thus, to   apply the Hilbert  function  method to the study of  (weak) elusiveness of  homogeneous polynomial mappings, we need  to  investigate the following.

\

{\bf  Problem 2.} For a given $f \in Pol^k_{hom}(\F^n, \F^m)$  find a lower bound for  $h_f(d)$.

\

Problems  1,2  are   related   to  the problems  of   searching for   lower bounds and upper bounds  of Hilbert functions. 
We refer the reader to \cite{Sombra1997}  for  an overview of  lower bounds and upper bounds  of Hilbert functions.

 
 

\subsection{ $(s,r)$-weakly elusive  subsets  and $(s,r)$-weakly elusive polynomial mappings}  In this subsection, adapting the  methods  of elusive subsets  developed in  \cite{Le2010},  we  reduce the  problem of verifying  whether a polynomial mapping $f : \F^n \to \F^m$ is $(s, r)$-weakly  elusive, to
 verifying  whether  a subset $A$ in the image  of $f(\F^n)$ is  $(s,r)$-weakly elusive.

\begin{definition}\label{def:eluhs}  A  subset $A \subset  \F^m$  will be called {\it $(s,r)$-weakly  elusive}, if $A$ does not lie  on the image of any homogeneous  
polynomial mapping $\Gamma: \F^s \to \F^m$ of degree  $r$.
\end{definition}

In order to prove that  $f$  is $(s,r)$-weakly elusive,  it suffices to  show the existence of  a  $k$-tuple  of points in the image of $f(\F ^n)$, which is
$(s,r)$-weakly elusive, i.e.   it does not lie on the image   of any  homogeneous  polynomial mapping $\Gamma: \F^s \to \F^m$ of degree $r$.
We regard   a $k$-tuple  $S_k = (b_1, \cdots, b_k)$, $b_i\in \F^m$, as  an element in $(\F^m )^k  = \F^{mk}$.

Recall that  we identify $Pol^r_{hom}(\F^s, \F^m)$ with $(Pol^r_{hom} (\F^s)^*)^m $.

\begin{proposition}\label{prop:keluh} (cf. \cite[Lemma 2.4]{Le2010})   A tuple   $S_k\in (\F^m)^k$ of $k$ points in $\F^m$ is $(s,r)$-weakly  elusive, if and only if  $S_k$  does not belong  to the image of the evaluation map
$$Ev ^r_{s, m, k}:  (Pol^r_{hom} (\F^s)^*)^m  \times (\F^s) ^k \to \F^{mk} ,$$
\begin{equation}
[( \tilde f_1 ^*, \cdots,  \tilde f_m^*) , (a_1, \cdots, a_k)] \mapsto ( \tilde f_1^*(a_1), \cdots,  \tilde f_m^* (a_k)).\label{eq:ev1}
\end{equation}
\end{proposition}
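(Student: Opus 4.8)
The plan is to prove the biconditional in Proposition~\ref{prop:keluh} by unwinding both sides into the same condition. First I would fix notation: for a homogeneous polynomial mapping $\Gamma = (\tilde f_1, \ldots, \tilde f_m) \in Pol^r_{hom}(\F^s, \F^m)$, identified with the tuple $(\tilde f_1^*, \ldots, \tilde f_m^*) \in (Pol^r_{hom}(\F^s)^*)^m$, and for a point $a \in \F^s$, the value $\Gamma(a) \in \F^m$ is by definition $(\tilde f_1^*(a), \ldots, \tilde f_m^*(a))$. Thus a tuple $S_k = (b_1, \ldots, b_k)$ with $b_i \in \F^m$ \emph{lies on the image of} $\Gamma$ means: for each $i \in [1,k]$ there exists $a_i \in \F^s$ with $b_i = \Gamma(a_i)$, i.e.\ $b_i = (\tilde f_1^*(a_i), \ldots, \tilde f_m^*(a_i))$.

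The key step is then a direct translation. The point $\bar S_k \in \F^{mk}$ belongs to the image of $Ev^r_{s,m,k}$ if and only if there exist $(\tilde f_1^*, \ldots, \tilde f_m^*) \in (Pol^r_{hom}(\F^s)^*)^m$ and $(a_1, \ldots, a_k) \in (\F^s)^k$ such that
\begin{equation}
(\tilde f_1^*(a_1), \ldots, \tilde f_m^*(a_k)) = \bar S_k = (b_1, \ldots, b_k),\label{eq:keluhproof}
\end{equation}
where I read off the coordinate blocks: the $i$-th block of the left-hand side of \eqref{eq:keluhproof} is $(\tilde f_1^*(a_i), \ldots, \tilde f_m^*(a_i)) = \Gamma(a_i)$ and the $i$-th block of $\bar S_k$ is $b_i$. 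So the right-hand side of \eqref{eq:keluhproof} holds exactly when $b_i = \Gamma(a_i)$ for all $i$, i.e.\ exactly when each $b_i$ lies in the image of the single homogeneous degree-$r$ mapping $\Gamma = (\tilde f_1^*, \ldots, \tilde f_m^*)$. Hence: $\bar S_k \in \operatorname{im}(Ev^r_{s,m,k})$ $\iff$ there is a homogeneous $\Gamma \in Pol^r_{hom}(\F^s, \F^m)$ whose image contains all of $b_1, \ldots, b_k$ $\iff$ $S_k$ is \emph{not} $(s,r)$-weakly elusive, by Definition~\ref{def:eluhs} applied to the finite subset $A = \{b_1, \ldots, b_k\}$. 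Contrapositive of both sides gives the stated equivalence.

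The one subtlety worth spelling out — really the only place any thought is needed — is the interchange of quantifiers: being $(s,r)$-weakly elusive for the \emph{set} $\{b_1,\ldots,b_k\}$ requires a \emph{single} $\Gamma$ working for all points simultaneously, and one must check that this matches the single choice of $(\tilde f_1^*,\ldots,\tilde f_m^*)$ in the definition of $Ev^r_{s,m,k}$ while the $a_i$ are allowed to vary with $i$. This is exactly the structure of the evaluation map in \eqref{eq:ev1}: it takes one polynomial tuple and a $k$-tuple of (independent) arguments, so there is no mismatch. I do not anticipate a genuine obstacle here; the content is entirely in setting up the identifications $Pol^r_{hom}(\F^s, \F^m) = (Pol^r_{hom}(\F^s)^*)^m$ and $\F^m \times \cdots \times \F^m = \F^{mk}$ consistently, after which the proof is a tautological rewriting, exactly as in \cite[Lemma 2.4]{Le2010}.
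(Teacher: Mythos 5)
Your proof is correct and is exactly the tautological unwinding that the paper has in mind (the paper omits the proof, referring to \cite[Lemma 2.4]{Le2010}): the point $\bar S_k$ lies in the image of $Ev^r_{s,m,k}$ iff a single tuple $(\tilde f_1^*,\ldots,\tilde f_m^*)=\Gamma$ and points $a_1,\ldots,a_k$ exist with $\Gamma(a_j)=b_j$ for all $j$, which is precisely the negation of $(s,r)$-weak elusiveness of $S_k$. Your attention to the quantifier structure (one $\Gamma$, varying $a_j$) and to the block-reading of the $(ij)$-components is the only real content, and you handle it correctly.
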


Proposition \ref{prop:keluh} is proved in the same way as \cite[Lemma 2.4]{Le2010}, so we omit its proof. (Note that we use here a   notation  for the evaluation mapping
$Ev^r_{s,m,k}$  which seems  better  than  the  notation $Ev ^k_{r,s,m}$ in \cite{Le2010}.)

\begin{remark}\label{rem:eva} 
 Our introduction of the notion  of weakly   elusive  functions is  motivated by the  fact that   the   evaluation  map $Ev ^r_{s, m, k}$ in Proposition \ref{prop:keluh}  is bi-homogeneous  with respect to the variables $(\tilde f_i ^*)$ and the variables $(a_i)$. So it is easier  to handle with  the new  evaluation map than  with the evaluation mapping associated with elusive functions. 
\end{remark}

\begin{corollary}\label{cor:keluh} (cf. \cite[Corollary 2.5]{Le2010}) A polynomial  mapping $f : \F^ n \to \F^m$  contains an $(s,r)$-weakly elusive
$k$-tuple, if and only if  the  subset
$$\hat f^k : =f(\F^ n) \underbrace{\times \cdots \times} _{k\text{ times}}  f(\F^n) \subset \F ^ {mk}$$
does not belong  to the image of the evaluation mapping $ Ev^ k_{s,r,m}$.
\end{corollary}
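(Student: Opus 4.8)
The plan is to derive Corollary \ref{cor:keluh} directly from Proposition \ref{prop:keluh} by quantifying over all $k$-tuples of points in $f(\F^n)$ and then expressing the resulting condition in terms of the product set $\hat f^k$. First I would unwind the definitions: $f$ contains an $(s,r)$-weakly elusive $k$-tuple precisely when there exists a choice of points $b_1,\dots,b_k \in f(\F^n)$ such that $S_k = (b_1,\dots,b_k)$ is $(s,r)$-weakly elusive as a tuple in $\F^m$. By Proposition \ref{prop:keluh}, this is equivalent to the existence of such a tuple with $\bar S_k \notin \operatorname{Im}(Ev^r_{s,m,k})$.

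Next I would observe that, under the identification $\F^m \times_{k\,\text{times}} \F^m = \F^{mk}$, the set of all points $\bar S_k$ arising from tuples $S_k$ with every $b_i \in f(\F^n)$ is exactly the product set $\hat f^k = f(\F^n) \times_{k\,\text{times}} f(\F^n) \subset \F^{mk}$. Therefore the statement "there exists a $k$-tuple in $f(\F^n)$ whose associated point avoids $\operatorname{Im}(Ev^r_{s,m,k})$" is logically equivalent to "$\hat f^k \not\subset \operatorname{Im}(Ev^r_{s,m,k})$". I should be a little careful about which evaluation map is meant: the corollary writes $Ev^k_{s,r,m}$, so I would note that this is the same map as $Ev^r_{s,m,k}$ up to the reordering of subscripts already flagged in the remark after Proposition \ref{prop:keluh}, so no substantive issue arises.

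Putting these two equivalences in a chain gives the corollary: $f$ contains an $(s,r)$-weakly elusive $k$-tuple $\iff$ some $\bar S_k$ with $S_k \subset f(\F^n)$ lies outside $\operatorname{Im}(Ev^r_{s,m,k})$ $\iff$ $\hat f^k \not\subset \operatorname{Im}(Ev^r_{s,m,k})$. The only point requiring genuine (if minor) attention is the set-theoretic identity $\{\bar S_k : b_i \in f(\F^n)\} = \hat f^k$, which is immediate from the definition of the $k$-fold Cartesian product and the identification of coordinates; one just needs to make sure the ordering of the $k$ blocks of $\F^m$ matches in both the definition of $Ev^r_{s,m,k}$ (where the $i$-th block comes from evaluating at $a_i$) and the definition of $\hat f^k$.

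Honestly there is no real obstacle here — the corollary is a formal repackaging of Proposition \ref{prop:keluh}, exactly as the paper indicates by citing \cite[Corollary 2.5]{Le2010} and by proving the underlying lemma elsewhere. If I wanted to be maximally efficient I would simply write: "This follows immediately from Proposition \ref{prop:keluh} by taking the union over all $k$-tuples $S_k$ of points of $f(\F^n)$ and noting that the set of corresponding points $\bar S_k \in \F^{mk}$ is, by definition, $\hat f^k$." The main thing to watch is consistency of notation between $Ev^r_{s,m,k}$ and $Ev^k_{s,r,m}$, and I would add a parenthetical remark reconciling the two, mirroring the note already made after Proposition \ref{prop:keluh}.
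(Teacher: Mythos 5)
Your proposal is correct and matches the paper's (implicit) argument: the corollary is stated as an immediate consequence of Proposition \ref{prop:keluh}, obtained exactly as you describe by existentially quantifying over $k$-tuples of points of $f(\F^n)$ and identifying the set of resulting points $\bar S_k$ with $\hat f^k$, so that ``contains a weakly elusive $k$-tuple'' becomes $\hat f^k \not\subset \mathrm{Im}(Ev^r_{s,m,k})$. Your reading of ``does not belong to the image'' as non-containment, and your reconciliation of the notations $Ev^k_{s,r,m}$ and $Ev^r_{s,m,k}$, are both consistent with the paper's usage elsewhere (e.g.\ in the proof of Theorem \ref{thm:exist}).
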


Clearly, the subset $\hat f^k$ does  not belong  to the image  of the  polynomial map $ Ev^r_{s,m,k}$, if   the Zariski closure
$\overline{\hat f^k}$ of   $\hat f^k$ does not belong to the Zariski closure $\overline{ Ev^ r_{s,m,k}}$ of the image of $ Ev^r_{s,m,k}$. Thus we   pose the  following  problem, whose  solution  is an important step  in proving  that a  polynomial mapping $f$ is  $(s,r)$-weakly  elusive.

{\bf Problem 3.}  Find  elements   of the  ideal  of $\overline{ Ev^ r_{s,m,k}}\subset (\F^m)^k$, i.e. elements
in the kernel of the ring homomorphism: $(Ev^r_{s,m,k})^*: \F[x_1, \cdots, x_{mk}] \to \F[y_1,\cdots,  y_N]$, $N = m\dim (Pol^r_{hom}(\F^s)^*) +ks$.

\

Once  we find  a  ``witness" $W$ in   $\ker   (Ev^r_{s,m,k})^*$,  we   could check   if $(f \times _{k\, times}  f)^*(W)  = 0$. If  not, then
the polynomial mapping $f$ is $(s,r)$-weakly elusive.

Problem  3 seems  very hard.  At the first step we should  study  property  of the   ideal  of $\overline{ Ev^ r_{s,m,k}}\subset (\F^m)^k$, which  could be sufficient for      proving  the weak elusiveness of   some concrete   polynomial mappings $f$, using Corollary  \ref{cor:keluh}.
Let us describe the  ideal  of $\overline{ Ev^ r_{s,m,k}}\subset (\F^m)^k$.  We  identify $\F^{mk}$ with  $Mat_{mk} (\F)$. Formula (\ref{eq:ev1})
says that for $i \in [1,m],\,j \in [1, k]$ the $(ij)$-component of  the image of the  evaluation mapping $Ev^r_{s,m,k}$   equals $ \tilde f_i^*( a_j)$. 
Using  the monomial basis  for $Pol^r_{hom} (\F^s)$ we represent $ \tilde f_i$  in coordinates  as $(\tilde f_i ^{\alpha})$, $\alpha\in [1, \binom {r+s -1}{r}]$.
We also represent $\alpha$ as a multi-index  $\alpha = \alpha_1 \cdots \alpha_s$  where $\sum_{q =1}^s \alpha_q = r$.
We  write $a_j = (a_j ^ p)$, where  $p \in [1, s]$  and $a_j ^p \in \F$. Then
\begin{equation}
 (\tilde f_i^{\alpha_1\cdots \alpha_s})^*  (a_j) = \sum_\alpha (\tilde f_i^ {\alpha_1 \cdots \alpha_s})^* [(a_j ^1)^{\alpha_1} \cdots (a_j ^s)^{\alpha_s}].\label{eq:ev2}
\end{equation} 
(For each $j$ the coordinates $a_j  ^1, \cdots,   a_j ^s$  form   a basis of $(\F^s) ^* = Pol ^1_{hom}(\F^s)$.  Thus      the monomials $\{ (a_j ^1)^{\alpha_1} \cdots (a_j ^s)^{\alpha_s}| \, \sum_{q =1}^s \alpha_q = r\}$   form  a basis of $Pol^r_{hom} (\F^s)$.)

For  $r =1$ we have $Pol^1_{hom} (\F^s)^* = \F^s$ and the   evaluation mapping is a quadratic map. Furthermore, the above representation of $Ev^k_{s,1,m}$ is the usual matrix multiplication $Mat _{ms}(\F) \times Mat_{sk}(\F) \to Mat_{mk}(\F)$. 

The following Proposition  is well-known; its  proof is based on the fact that    the rank of  a matrix is equal to the rank of the span of its  column vectors and equal to the rank of the span of  its line  vectors.

\begin{proposition}\label{prop:rank}  If $s\le \min (k, m)$ then 
the image  of $Ev ^1_{s,  m,k}$ consists  of exactly  of matrices  of rank at most $s$  in $Mat_{mk} (\F)$.  If $s \ge \min(k,m)$  then
  $Ev^1_{s,m,k}$ is surjective.
\end{proposition}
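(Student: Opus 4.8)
The plan is to reduce the statement, as the discussion preceding it already indicates, to the elementary behaviour of matrix multiplication. After the identifications $Pol^1_{hom}(\F^s)^* = \F^s$ and $\F^{mk} = Mat_{mk}(\F)$, the map $Ev^1_{s,m,k}$ is exactly the product map
\[
\mu\colon Mat_{ms}(\F)\times Mat_{sk}(\F)\longrightarrow Mat_{mk}(\F),\qquad (A,B)\mapsto AB,
\]
where the rows of $A$ are the vectors $\tilde f_1^*,\dots,\tilde f_m^*\in\F^s$ and the columns of $B$ are the points $a_1,\dots,a_k\in\F^s$, since the $(i,j)$-entry of $AB$ equals $\sum_{p=1}^s A_{ip}B_{pj}=\tilde f_i^*(a_j)$. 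So it suffices to determine the image of $\mu$.

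First I would show $\operatorname{im}\mu\subseteq\{\,C\in Mat_{mk}(\F):\operatorname{rank}C\le s\,\}$: this is immediate from $\operatorname{rank}(AB)\le\min(\operatorname{rank}A,\operatorname{rank}B)$ together with $\operatorname{rank}A\le s$ (as $A$ has only $s$ columns) and $\operatorname{rank}B\le s$ (as $B$ has only $s$ rows). For the reverse inclusion I would use rank factorisation: given $C$ with $\operatorname{rank}C=t\le s$, pick a basis $v_1,\dots,v_t\in\F^m$ of the column space of $C$, let $A\in Mat_{ms}(\F)$ have first $t$ columns $v_1,\dots,v_t$ and the remaining $s-t$ columns zero, and let $B\in Mat_{sk}(\F)$ record, in rows $1$ through $t$ (and $0$ in rows $t+1,\dots,s$), the coordinates of the columns of $C$ with respect to $v_1,\dots,v_t$; then $AB=C$. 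Hence $\operatorname{im}\mu$ is precisely the set of $m\times k$ matrices of rank at most $s$.

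It then remains to split into the two cases. If $s\le\min(k,m)$ this is exactly the first assertion. If instead $s\ge\min(k,m)$, then every $C\in Mat_{mk}(\F)$ has $\operatorname{rank}C\le\min(m,k)\le s$, so $\operatorname{im}\mu=Mat_{mk}(\F)$ and $Ev^1_{s,m,k}$ is surjective. I do not expect a genuine obstacle here: the only mild care needed is in matching the monomial-basis coordinates of $Ev^1_{s,m,k}$ with ordinary matrix multiplication, and in padding the factors with zero rows and columns so that they have the prescribed sizes $m\times s$ and $s\times k$; both are routine.
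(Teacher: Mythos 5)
Your proof is correct and follows exactly the route the paper intends: the paper gives no proof, merely remarking that the statement is well-known and rests on the equality of row rank and column rank, which is precisely the rank-factorisation argument you spell out. Your identification of $Ev^1_{s,m,k}$ with matrix multiplication and the two inclusions for the image are the standard filling-in of that sketch.
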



Proposition \ref{prop:rank} tells us that the  image of $Ev ^1_{s,m,k}$ is a determinantal variety if $s\le \min (k, m)$. The generators of the ideal $I(Ev ^1_{s,m,k}(Mat_m ^s  \times Mat ^s_k))$
are minors of rank $(s+1)\times (s+1)$.


Now let us consider the  case $ r\ge 1$.  Note that
\begin{equation}
Ev ^r_{s,m,k}  ( \tilde f^*, a) = Ev^1_{s,  m,k}\circ (Id, \nu_r ^k) (\tilde f^*, a), \label{eq:iter1}
\end{equation}
where $\tilde f^* \in (Pol^r_{hom}(\F^s)^*)^m$, $a \in (\F^s)^k$  and  $\nu^k_r$ is the  direct sum  of $k$ copies  of the Veronese map $\nu_r$,
$$\nu_r^k: (\F^{s})^k \to (F^{b(s+r-1,s)})^k, (a_1, \cdots, a_k) \mapsto (\nu_r(a_1), \cdots, \nu_r(a_k)). $$
Let us describe the ideal of the  image of  the polynomial mapping $(Id, \nu_r^k)$. 
It is known that (see e.g. \cite[p. 23]{Harris1993})
\begin{equation}
I(\nu_r(\F^s)) = \la  (x^{\alpha_1\cdots \alpha_s}x^{\beta_1 \cdots \beta_s} -  x^{\gamma_1\cdots \gamma_s} x^{\delta_1\cdots \delta_s})|\, X^{\alpha}X^{\beta} = X^{\gamma} X^{\delta}\ra_{\F[x^{\alpha_1 \cdots \alpha_s}]}
\end{equation}
where $X^\alpha$ denotes  the monomial $x_1^{\alpha_1} \cdots x_s ^{\alpha_s}$ corresponding  to  the multi-index $\alpha = \alpha_1\cdots \alpha_s$ and
$\{x^{\alpha_1\cdots \alpha_s}\}$ is a basis of $\F^{b(s+r-1,s)}$.
Next, we observe that
\begin{equation}
I(Id, \nu_r^k)((Pol^r_{hom}(\F^s)^*)^m, (\F^s)^k) = \la \oplus_{i =1} ^k I_i(\nu_r(\F ^ s))\ra. \nonumber
\end{equation}
We regard  elements of $(Pol^r_{hom}(\F^s)^*)^m$  as  matrices  over $\F$ of  size $Sm$, $S = \binom{s+r-1}{s}$,  and elements of
$(\F^S)^k$  as matrices   over $\F$ of size  $Sk$. Summarizing we have
\begin{eqnarray}
\ker Ev ^r_{s,m,k} = \{ \tilde  g \in Pol^*_{hom} (Mat _{mk} (\F))|\,  \tilde g([f^{*,l_1 \cdots l_s} _j] \cdot [ x^{l_1, \cdots l_s}_t]) \in \nonumber\\  
 I(Id, \nu_r^k)((Pol^r_{hom}(\F^s)^*)^m,(\F^s)^k)\}.\label{eq:ker}
\end{eqnarray}

The   identity  (\ref{eq:ker})   serves as  a starting point for  our future  work  on Problem 3.

\

In what follows we show the existence  of
many weakly elusive  homogeneous polynomial mappings.

\begin{theorem}\label{thm:exist}  Assume that $s\le m-1$ and
\begin{equation}
\binom{n+p-1}{p} \ge \frac{m\binom{s+r -1}{r} }{ m -s }.\label{eq:codi2}
\end{equation}
If $char(\F) = 0$ or $ p \le char (\F) -1$, then  the image of  almost  every  (i.e., except a    subset  of codimension at least 1) homogeneous polynomial  mapping $P \in Pol^p_{hom}(\F^n, \F^m)$
  contains  a  $k$-tuple  of points in $\F^m$ that is $(s,r)$-weakly elusive, where $k  = \binom{n+p-1}{p}$.
\end{theorem}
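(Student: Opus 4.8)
The plan is to reduce the assertion, via Corollary~\ref{cor:keluh}, to a dimension estimate for the evaluation map $Ev^r_{s,m,k}$. Write $k=\binom{n+p-1}{p}=\dim Pol^p_{hom}(\F^n)$, identify $Pol^p_{hom}(\F^n,\F^m)$ with the affine space $\F^{mk}$, and put
$$V:=\overline{Ev^r_{s,m,k}\bigl((Pol^r_{hom}(\F^s)^*)^m\times(\F^s)^k\bigr)}\subseteq\F^{mk}.$$
By Corollary~\ref{cor:keluh} together with the remark that follows it on Zariski closures, any $P$ for which $\hat P^k:=P(\F^n)\times_{k\ \text{times}}P(\F^n)$ is not contained in $V$ does what we want. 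So I would work with the bad locus
$$B:=\{\,P:\hat P^k\subseteq V\,\}=\bigcap_{x\in(\F^n)^k}\{\,P:(P(x_1),\dots,P(x_k))\in V\,\}.$$
For each fixed $x$ the map $P\mapsto(P(x_1),\dots,P(x_k))$ is linear in $P$, so every set in this intersection is Zariski closed and hence so is $B$. Since $\F^{mk}$ is irreducible, it is enough to exhibit a single $P\notin B$: then $B$ is a proper closed subset, i.e.\ has codimension $\ge1$, and every $P$ outside it is of the required kind.

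The crux is that $V\subsetneq\F^{mk}$. The domain of $Ev^r_{s,m,k}$ has dimension $m\binom{s+r-1}{r}+ks$, and the hypothesis $\binom{n+p-1}{p}\ge m\binom{s+r-1}{r}/(m-s)$ is exactly the inequality $mk\ge m\binom{s+r-1}{r}+ks$, so $\dim V\le mk$. To turn this into a strict inequality I would use that $Ev^r_{s,m,k}$ is invariant under the $\mathbb{G}_m$-action $\mu\cdot(f^*,a):=(\mu^{-r}f^*,\mu a)$ on its (irreducible) domain, the invariance being the identity $(\mu^{-r}f_i^*)(\nu_r(\mu a_j))=f_i^*(\nu_r(a_j))$; this action is generically free, so the generic fibre of $Ev^r_{s,m,k}$ has dimension $\ge1$ and therefore $\dim V\le m\binom{s+r-1}{r}+ks-1\le mk-1$. (When $\binom{s+r-1}{r}<m$ one may instead use the factorization (\ref{eq:iter1}) and Proposition~\ref{prop:rank} to confine $V$ to the locus of matrices of rank $\le\binom{s+r-1}{r}$ in $Mat_{mk}(\F)$, which is already a proper subvariety.) I expect this to be the main obstacle: the naive count gives only $\dim V\le mk$, so the equality case of the hypothesis genuinely requires squeezing out the extra unit of codimension, and the $\mathbb{G}_m$-argument is cleanest after base change to $\overline{\F}$.

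It remains to produce one $P\notin B$. Choose $b=(b_1,\dots,b_k)\in\F^{mk}\setminus V$ with $b_j\in\F^m$. Using the hypothesis on $\mathrm{char}(\F)$, pick points $x_1,\dots,x_k\in\F^n$ for which $g\mapsto(g(x_1),\dots,g(x_k))$ is a linear isomorphism $Pol^p_{hom}(\F^n)\xrightarrow{\ \sim\ }\F^k$; this is precisely where the count $k=\binom{n+p-1}{p}$ and the restriction on the characteristic enter, ensuring non-vanishing of the relevant Vandermonde-type determinant. For each $i\in[1,m]$ let $P_i\in Pol^p_{hom}(\F^n)$ be the unique form with $P_i(x_j)=(b_j)_i$ for all $j$, and set $P=(P_1,\dots,P_m)$. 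Then $(P(x_1),\dots,P(x_k))=b\notin V$, so $b\in\hat P^k\setminus V$ and $P\notin B$. Hence $B$ has codimension $\ge1$, and for every $P$ outside it Corollary~\ref{cor:keluh} produces an $(s,r)$-weakly elusive $k$-tuple inside $P(\F^n)$, as desired.
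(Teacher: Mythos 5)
Your proof is correct and follows essentially the same route as the paper: reduce via Proposition~\ref{prop:keluh}/Corollary~\ref{cor:keluh} to the dimension count $mk\ge m\binom{s+r-1}{r}+ks$ (which is exactly (\ref{eq:codi2})), squeeze out the extra unit of codimension from the bi-homogeneity of $Ev^r_{s,m,k}$, and transfer from $k$-tuples back to maps $P$ by interpolation at $k=\binom{n+p-1}{p}$ points, which is where the characteristic hypothesis enters. Your explicit $\mathbb{G}_m$-invariance argument and the closedness of the bad locus $B$ merely spell out what the paper compresses into ``the evaluation mapping is bi-homogeneous of degree $(1,r)$'' and its appeal to the isomorphism $I^p_{n,m}$ from \cite{Le2010}.
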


\begin{proof} Assume that $char (\F) = 0$ or $p \le char (\F) -1$. In \cite[Corollary 2.8]{Le2010} we provided a linear  isomorphism
$$I^p_{n,m}: Pol^p_{hom}(\F^n, \F^m) \to \F^{m\cdot b(n+p-1,p)},$$
using the interpolation formula \cite[Proposition 2.6]{Le2010}, which has the following property.  Let
$S_{n,p,m} \in \F^{m\cdot b(n+p-1,p)}$ be a tuple  of $\binom{n+p-1}{p}$  points  in $\F^m$.
Then  $(I^p_{n,m})^{-1}(S_{n,p,m})$ is a homogeneous polynomial mapping of degree $p$ from  $\F^n$ to $\F^m$ 
whose image is an algebraic  subset of $\F^m$ that   contains  all  the $\binom{n+p-1}{p}$ points   of the   tuple $S_{n,m,p}$.
Thus, to prove Theorem \ref{thm:exist}, it suffices  to show that  almost every (up to a subset of codimension at least 1)  tuple $S_{n,p,m}$ of $\binom{n+p-1}{p}$  points  in $\F^m$
is  $(s,r)$-weakly elusive, if (\ref{eq:codi2}) holds.

We note that 
\begin{equation}
\dim [(Pol_{hom} ^r (\F^s) ^*)^m \times  (\F^s) ^ k ] \le m\binom{s+r-1}{r}+ k\cdot s.\label{eq:codi3a}
\end{equation}
The equality in (\ref{eq:codi3a}) holds if $char(\F) = 0$  or $r \le char (\F) -1$. Since the  evaluation mapping $Ev ^r_{s,m,k}$ is bi-homogeneous of degree $(1, r)$, we derive  from (\ref{eq:codi3a}) that the image of the  evaluation map  $Ev_{s,r,m} ^ k$ is  a subset  of codimension at  least 1, if we have
\begin{equation}
m\binom{s+r-1}{r}+ k\cdot s  \le mk. \label{eq:codi3}
\end{equation}
Now assume that  (\ref{eq:codi2}) holds. Then for $k = \binom{n+p-1}{p}$,   the condition (\ref{eq:codi3}) holds.  Hence,  almost every (except  a subset of codimension at least 1)   point $\overline{S_{n,p,m}} \in \F^{m\cdot (n+p-1,p)}$  lies outside    the image  of  the evaluation map $Ev_{s,r,m}^k$, or equivalently,   by Proposition 
\ref{prop:keluh},  $S_{n,p,m}$ is an $(s,r)$-weakly elusive tuple  of points  in $\F^m$.
This completes the proof  of Theorem \ref{thm:exist}.
\end{proof}

\section{Applications}\label{sec:4}
In this section we introduce the notion of a    multivariate polynomial  that is {\it homogeneous  relative to a subset of its variables} (Definition \ref{def:ph}, Example \ref{exam:ph}).  
 We associate with each  polynomial $\tilde f$  that is homogeneous  relative to a subset of its variables a series of natural  polynomial families of  $m$-tuples  of homogeneous polynomials,  whose circuit  size   is bounded  from above  by the circuit size  of  $\tilde f$ (Proposition \ref{prop:natu}).
  As  a consequence,   we     estimate   from   below the circuit size of     $\tilde f$ in terms  of  the weak elusiveness of the associated polynomial mapping (Corollary \ref{cor:est1}).  For   a  large  class of  polynomials  $\tilde f$  
  our estimates  are non-trivial  (Examples  \ref{ex:raz}, \ref{ex:bi}.)  Using  Corollary \ref{cor:est1}, we  suggest a method  for obtaining  lower bounds  for the circuit size  of the permanent $P_n$  over  a  field  $\F$ of characteristic 0 (Lemmas \ref{lem:span}, \ref{lem:per}).
\subsection{Polynomial  families of $m$-tuples  of  homogeneous polynomials associated  with a  partially homogeneous   polynomial}
\begin{definition}\label{def:ph} A  multivariate polynomial $\tilde f \in \F[x_1, \cdots, x_n]$ will be called  {\it  homogeneous of degree  $d$ relative to  a non-empty proper subset $Z= \{  x_{k +1}, \cdots, x_{k+l}\} $ of  the set of variables $(x_1, \cdots, x_n)$}   if
$$ f(x_1, \cdots, x_k, \lambda  x_{k+1}, \cdots, \lambda x_{k+l}, x_{k+l+1}, \cdots , x_n) =  \lambda ^d f(x_1, \cdots, x_{k+1}, \cdots, x_{k+l}, \cdots, x_n)$$
for all $\lambda \in \F$. 
\end{definition}

For a   set  $Z$  of variables  let us denote  by $\F\la Z \ra $ the vector space  over $\F$ whose  coordinates  are   the variables in $Z$.

\begin{example}\label{exam:ph} 1.  For $i\in [1,n]$ let $Z =  Z_i : =\{ x_{ij}|\, j \in [1,n]\}$.  Then  the permanent  $Per_n$
is  homogeneous of degree $1$ relative to $Z$.

2. Let $\tilde f \in Pol^r_{hom} (\F \la Z \ra)$  and $\tilde g \in Pol^* (\F \la   Y \ra )$. Then $\tilde g \cdot \tilde f$ is  homogeneous of degree $r$ relative to $Z$.
  
3. The polynomial $\tilde f = x^2 + y^2$  is  not  homogeneous  relative  to $Z = \{ z\}$.
\end{example}

 
Now assume that $\tilde  f \in \F[x_1, \cdots, x_n]$  is  a    polynomial that is homogeneous of degree $r$ relative to  a  proper subset $Z$  of its variables. 
We  shall  associate  with $\tilde f$ a   series of polynomial families $P_\lambda (\tilde f)$ of $m$-tuples of homogeneous polynomials whose circuit size is controlled   from above by the circuit size of $\tilde f$.

 W.l.o.g. we assume  that the circuit size $L(\tilde f)$ of $\tilde f$ is larger than $\#(Z)$. 
Set
$$Z^\perp : = \{ x_1, \cdots, x_n\} \setminus  Z.$$
Since $Z$ is a proper  subset, $Z ^\perp$ is not  empty. Let $X$ be a subset  of $Z ^\perp$ such  that   for each $x_i \in  X$  the polynomial $P$ has exactly  degree 1 in $x_i$.  This set $X$ may be  empty  and need not to be
the subset  of all  variables   $x_j$ of degree  1 in $P$.

Let \begin{itemize}
\item $Y: = Z^\perp \setminus  X$;
\item  $p: = \#(X)$,    $k : = \#(Y)$  and $l : = \#(Z)$;
\item  $r$: = the  total degree of $\tilde f$ in $Z$; 
\item $m' := \dim Pol^r_{hom}(\F\la Z \ra)  = \binom{l +r -1}{r}$;
\item $m: = m'$ if $X$  is an empty set. If not, set $m: = p \cdot m'$;
\item  $h:[1,m'] \to Pol^r_{hom}(\F\la Z\ra)$  an  ordering of the monomial basis.
\end{itemize}

Regarding  $X$  as a parameter, we  shall associate  with $\tilde f$  a polynomial   family of ($p$-tuples)  of  homogeneous  polynomials in variables $Z$  by specifying 
the associated polynomial  mapping $f$  as  follows.
\

{\it Case 1.}  Assume that $X$ is an empty set, so $m = m'$. Then $\tilde f$ is a polynomial in   variables $Y, Z$. Now  we write $\tilde f$  as follows
\begin{equation}
\tilde f (x_1, \cdots, x_n): = \sum_{q =1} ^m \tilde f_{q} (Y) h(q),\nonumber 
\end{equation}
where $\tilde f_{q}  \in Pol^*(\F \la Y \ra )$. We associate  with $\tilde f$  the following polynomial mapping $f: \F ^k  \to Pol^r _{hom} (\F \la Z \ra )$:
\begin{equation}
f (\lambda) : =\sum_{q=1} ^m \tilde f_{q}(\lambda) h(q).\label{eq:pt1}
\end{equation}

\

{\it Case 2.} Assume that $X$ is not empty, i.e. $p \ge 1$.  Let us enumerate  the polynomials in the  set
$\{{ \p \tilde f \over \p x}, x\in X\}$  by $\tilde f_1, \cdots, \tilde f_p$.  For $j \in [1, p]$, we write $\tilde f_j\in  Pol^*(\F\la Y, Z\ra)$ as  follows
$$\tilde f_j( Y, Z) : = \sum_{q = 1} ^{m'} \tilde f_{j, q} (Y) h (q),$$
where $\tilde f_{j, q} \in Pol^*(\F \la Y \ra )$. We associate with $\tilde f$ the following polynomial mapping $f: \F ^k  \to (Pol^r _{hom} (\F \la Z \ra ))^p$:
\begin{equation}
f (\lambda) : = (\sum_{q=1} ^{m'} \tilde f_{1,q}(\lambda) h(q),   \cdots , \sum_{q=1}^{m'}  \tilde f_{p, q}(\lambda) h(q)) \in (Pol^r_{hom} (\F\la Z \ra )) ^ p .\label{eq:pt2}
\end{equation}

\begin{example}\label{ex:per} Let $n$ be  a basis parameter.  We shall apply  the  above construction to the  permanent $Per_n$.
We fix an additional  parameter $2\le t\le n-2$.  Then   we partition the  set of variables $\{ x_{ij}, 1\le i, j\le n\}$ of   the permanent $Per_n$
into  three subsets   $X$, $Y$, $Z$  as follows
$$ X = \{ x_{1i}, \, i\in[1,n]\},$$
$$ Y = \{ x_{ui}, \, 2 \le u \le t,\,  i \in [1,n]\},$$
$$ Z = \{ x_{ui}, \, t+1 \le  u \le n, \,i \in [1,n] \}.$$
\begin{itemize}
\item Set $m': = \dim  Pol^{n-t}_{hom} (\F\la Z \ra ) = \binom {(n-t)(n +1)  -1 }{n-t}$. 
\item Set $m : = n \cdot m'$. 
\item Let $h: [1,m'] \to Pol^{n-t} _{hom} (\F\la Z\ra)$  be  an ordering of the monomial  basis.
\end{itemize}
Since $\#(X)=n \ge 1$,  we  are in the  Case 2.  We represent the permanent as  follows
\begin{equation}
Per_n ([x_{ij}]) = \sum_{i =1} ^n  x_{1i}P_{n-1, i} (Y, Z),\label{eq:sum1}
\end{equation}
where $P_{n-1, i}(Y, Z)= {\p Per_n \over \p x_{1i}}$.
For each  $i \in [1,n]$  there is a unique decomposition
$$P_{n-1, i} (Y, Z) = \sum_{q =1}^{m'} \tilde f_{n-1, i, q}  (Y) h (q), $$
where $\tilde f_{n-1, i, q} \in Pol ^{t-1}_{hom} (\F ( \la  Y \ra ))$.
Note that $\#(Y) = (t-1)n$.  We  now associate  with  the permanent $Per_n$  and with the partition  of the variables  of $Per_n$ the following polynomial     mapping 
\begin{equation}
f : \F\la Y\ra \to (Pol ^{n-t}_{hom} (\F\la Z \ra))^n\label{eq:f}
\end{equation}
by the above recipe. Its $i$-th component $f_{i}\in  Pol ^{n-t}_{hom} (\F\la Z \ra) $, for $ i \in [1, n]$, is defined as follows (cf. \ref{eq:pt2}):
\begin{equation}
 f_{i} (\lambda_{21}, \cdots, \lambda_{tn}) : = \sum_{q =1} ^{m'}\tilde f _{n-1, i,q} (\lambda_{21}, \cdots, \lambda_{tn})  h ^{-1}(q).\label{eq:aspm}
 \end{equation}
 
\

Now we  make another  partition of  the  set of variables $\{ x_{ij}, 1\le i, j\le n\}$ of   the permanent $Per_n$ into three subsets $X', Y', Z'$, where $X'$ is the empty;  in    other words, we are in the  Case 1. Let  
$$Y':= \{ x_{ui} | 1\le u \le t, \, i \in [1, n]\},$$ 
$$Z': = \{ x_{ui}| \, u+1 \le  i \le n, \,i \in [1,n] \}.$$
Note that $\#(Y') = tn$. We associate with the permanent $Per_n$  another   family  of polynomial mappings
$ f:  \F^{tn}  \to  Pol ^{n-t} _{hom} (\F \la Z' \ra )$, using the  recipe  in (\ref{eq:pt1}): 
\begin{equation}
 f(\lambda_{11}, \cdots, \lambda_{tn}) :=\sum_{q=1}^m \tilde f_{q} (\lambda_{11}, \cdots, \lambda_{tn}) h (q).\label{eq:gtilde}
\end{equation}
Here $\tilde  f_{q} \in Pol ^{n-t}_{hom} (\F \la Y' \ra)$ is defined uniquely from the  equation
$$Per_n (Y', Z') = \sum_{ j = 1} ^{m '} \tilde  f_{n,j} (Y')   h (j).$$
\end{example} 



\

The following  Proposition shows that the  circuit size  of each member in the  polynomial  families   of tuples of homogeneous polynomials of  equal  degree that is associated with a  polynomial $f$  which is homogeneous relative   to a subset of its variables  is bounded  by the circuit size  of $P$.

\begin{proposition}\label{prop:natu}  1. Let $f : \F^k  \to Pol ^r _{hom}(\F\la Z \ra )$ be the polynomial mapping  in (\ref{eq:pt1}).
Then for  each $\lambda \in \F^k $  the circuit size  of  the  polynomial $f(\lambda)$ is at most  $L(\tilde f)$.

2.  Let $f: \F ^k  \to  (Pol^r_{hom}(\F \la Z \ra )^p$ be the polynomial mapping  in (\ref{eq:pt2}).  Then 
for  each $\lambda \in \F^k $  the circuit size  of  the $p$-tuple $f(\lambda)$ of homogeneous polynomials of degree $r$  is at most  $5L(\tilde f)$.
\end{proposition}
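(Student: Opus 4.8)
The plan is to reduce both statements to the evaluation of the given arithmetic circuit $\Phi$ for $\tilde f$ at specialized inputs, exploiting that each coordinate function $f_q$ (resp.\ $f_{j,q}$) is obtained from $\tilde f$ by a linear-algebraic operation (extraction of the coefficient of a monomial in $Z$, possibly after one partial differentiation) and that such operations can be realized by circuits of controlled size. For Part 1, fix $\lambda\in\F^k$. Substituting $Y=\lambda$ into the circuit $\Phi$ for $\tilde f$ yields a circuit $\Phi_\lambda$ of size $\le L(\tilde f)$ computing $\tilde f(\lambda,Z)=\sum_{q=1}^m \tilde f_q(\lambda)\,h(q)\in Pol^r_{hom}(\F\la Z\ra)$; but this is exactly $f(\lambda)$ by \eqref{eq:pt1}. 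The one point that needs care is that $\tilde f(\lambda, Z)$ is only \emph{partially} the sum we want: $\tilde f$ is homogeneous of degree $r$ in $Z$, so after the substitution $Y=\lambda$ every monomial appearing has $Z$-degree exactly $r$, hence $\tilde f(\lambda,Z)$ \emph{equals} $f(\lambda)$ as a polynomial in $Z$ with no further homogeneous projection needed. Thus $L(f(\lambda))\le L(\Phi_\lambda)\le L(\tilde f)$, which is Part 1.

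For Part 2 the extra ingredients are (a) partial differentiation and (b) bundling $p$ outputs. First, the classical Baur--Strassen theorem says that from a size-$L$ circuit for $\tilde f$ one obtains a single circuit of size $\le c\,L$ (with $c$ a small absolute constant, $c\le 4$ in the standard accounting, or $5$ including the output of $\tilde f$ itself) computing $\tilde f$ together with all first partial derivatives $\p\tilde f/\p x$, $x\in X$, simultaneously. Substituting $Y=\lambda$ into this combined circuit gives a circuit of size $\le 5L(\tilde f)$ whose $p$ relevant output gates compute $\tilde f_j(\lambda, Z)=\p\tilde f/\p x_j\,(\lambda,Z)$ for $j\in[1,p]$. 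As in Part 1, each $\tilde f_j$ is homogeneous of degree $r$ in $Z$ (differentiating with respect to a variable in $X\subset Z^\perp$ does not change the $Z$-degree, and $\tilde f$ has degree $1$ in each $x\in X$ so the derivative is genuinely a polynomial, not zero in general), so after the substitution the $j$-th output equals $\sum_{q=1}^{m'}\tilde f_{j,q}(\lambda)h(q)$, i.e.\ the $j$-th component of $f(\lambda)$ in \eqref{eq:pt2}. Hence the whole $p$-tuple $f(\lambda)$ is computed by one circuit of size $\le 5L(\tilde f)$, giving the claim.

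I would write the argument by isolating a small lemma: if a polynomial $g\in\F[x_1,\dots,x_n]$ is homogeneous of degree $r$ in a subset $W$ of variables, then for any assignment of the remaining variables to scalars, the specialized polynomial in $W$ is again homogeneous of degree $r$ in $W$ and is computed by a circuit of size at most that of $g$. This makes the ``no homogeneous projection needed'' point explicit and is reused in both parts. The bookkeeping for the constant in Part 2 is the only place where one must be slightly careful: depending on the exact conventions for counting circuit size (whether one counts the $p$ extra output gates, the cost of the Baur--Strassen transformation, and whether $L(\tilde f)\ge \#(Z)$ is used to absorb the substitution cost), the constant $5$ is obtained; I would cite the Baur--Strassen theorem in the precise form ``$L(\tilde f, \p\tilde f/\p x_1,\dots,\p\tilde f/\p x_n)\le 5L(\tilde f)$'' and note that restricting attention to the subset $X$ only decreases the count.

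The main obstacle is not conceptual but bookkeeping: pinning down the constant $5$ in Part 2 under whatever precise definition of circuit size the paper uses, and making sure the partial-derivative circuit's output gates are genuinely simultaneously available (not recomputed per derivative). The homogeneity point is easy once stated but must be stated, since \eqref{eq:pt1} and \eqref{eq:pt2} would otherwise require an extra degree-$r$ homogeneous-component extraction, which would blow up the size by a factor depending on the total degree of $\tilde f$ in $Y$ and would break the clean bound.
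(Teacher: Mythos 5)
Your proposal is correct and follows essentially the same route as the paper: Part 1 is the substitution $Y=\lambda$ into a minimal circuit for $\tilde f$ (with the homogeneity of $\tilde f$ in $Z$ guaranteeing that $\tilde f(\lambda,Z)$ coincides with $f(\lambda)$, a point the paper uses implicitly and you rightly make explicit), and Part 2 invokes Baur--Strassen to compute all derivatives $\{\p\tilde f/\p x : x\in X\}$ by one circuit of size $<5L(\tilde f)$ before the same substitution. The only difference is expository: you spell out the homogeneous-specialization lemma and the constant bookkeeping, which the paper leaves terse.
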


\begin{proof} 1. Let us consider  the case that $f$ is defined  by (\ref{eq:pt1}).
Note that for each $\lambda \in \F ^k $, we have
$$f(\lambda)(Z) = \tilde f(\lambda, Z) \in Pol^r _{hom} (\F\la Z \ra ).$$
It follows that the  circuit size $L(f(\lambda))$  is  at most
$L(\tilde f)$, what is required to prove. 

2. By  the Baur-Strassen  result \cite{BS1983}, there  exists  an arithmetic  circuit  $\Phi$ of   size  less than $5L(\tilde f)$  that computes the $p$-tuple 
$$\{ {\p \tilde f\over \p x} |\, x \in X\} = \{\tilde f_i(Y, Z) \in Pol ^{*} (\F^{*}(\la Y , Z\ra )) |\, i = [1, p]\} .$$
Note that for any value $ \lambda\in \F^k$    we have
$$ f(\lambda) = (\tilde f_1 (\lambda, Z),\cdots, \tilde f_p(\lambda, Z) ),$$
 which is  an $p$-tuple of polynomials in $Z$ of circuit size less than  or  equal to  $Size (\Phi)$. Since  $Size (\Phi)< 5L(\tilde f)$, we obtain 
 $$L(f(\lambda)) < 5L(\tilde f) \text{ for all } \lambda\in \F^k.$$
 This completes  the proof  of  Proposition \ref{prop:natu}.
\end{proof}

Combining Proposition \ref{prop:natu} with Corollary \ref{cor:his}, we obtain immediately

\begin{corollary}\label{cor:est1} 
1. Assume that  the   polynomial mapping $f$ defined  by the recipe in (\ref{eq:pt1})   is $(s,2r-1)$-weakly elusive. 
Then  the circuit size $L(\tilde f)$ of the associated  polynomial $\tilde f$ satisfies
$$L(\tilde f)  >  {\sqrt s\over 16 \cdot  (r+2)^{3}}.$$
2. Assume that  the   polynomial mapping $f$ defined in (\ref{eq:pt2})  is $(s,2r-1)$-weakly elusive. 
Then  the circuit size $L(\tilde f)$  of the  associated  polynomial $\tilde f$ satisfies
$$L(\tilde f)  >  {\sqrt s\over 80\cdot (r+2)^{3}}.$$
\end{corollary}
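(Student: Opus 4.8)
The plan is to obtain Corollary \ref{cor:est1} as a direct composition of the two results already at our disposal: Proposition \ref{prop:natu}, which bounds the circuit size of the associated polynomial family in terms of $L(\tilde f)$, and Corollary \ref{cor:his}, which turns $(s,2r-1)$-weak elusiveness of the associated mapping into a lower bound for the circuit size of some member of that family. First I would recall that the polynomial mapping $f$ in (\ref{eq:pt1}) (resp. in (\ref{eq:pt2})) is by Definition \ref{def:polf} the mapping associated with the polynomial family $P_\lambda := f(\lambda)$ of $m$-tuples (here $m=1$, resp. $m=p$) of homogeneous polynomials of degree $r$ in the variables $Z$.

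For part 1, I would argue as follows. By Proposition \ref{prop:natu}.1, for every $\lambda\in\F^k$ the circuit size of $f(\lambda)$ is at most $L(\tilde f)$; hence $P_\lambda$ is a polynomial family of $1$-tuples of homogeneous polynomials of degree $r$ whose members all have circuit size at most $L:=L(\tilde f)$. By hypothesis $f$ is $(s,2r-1)$-weakly elusive, so Corollary \ref{cor:his} (applied with the degree parameter $r$, i.e. with $2r-1$ in the role of the elusiveness-degree) yields that $P_\lambda$ has a member of circuit size at least $\frac{\sqrt s}{8r^{3/2}}$. But every member of $P_\lambda$ has circuit size at most $L(\tilde f)$, so
\begin{equation}
L(\tilde f)\ \ge\ \frac{\sqrt s}{8\,r^{3/2}},\nonumber
\end{equation}
and the strict inequality in the statement follows since weak elusiveness is incompatible with equality in the bound produced by Proposition \ref{prop:eluh} (more precisely, if $L(\tilde f)=\frac{\sqrt s}{8r^{3/2}}$ then $s\le s_0=64L(\tilde f)^2 r^3$, so Proposition \ref{prop:eluh} would contradict the assumed $(s,2r-1)$-weak elusiveness).

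For part 2, the argument is identical except that Proposition \ref{prop:natu}.2 only gives the bound $5L(\tilde f)$ for the circuit size of the $p$-tuple $f(\lambda)$. Taking $L:=5L(\tilde f)$ in Corollary \ref{cor:his} gives a member of $P_\lambda$ of circuit size at least $\frac{\sqrt s}{8r^{3/2}}$, while each member of $P_\lambda$ has circuit size at most $5L(\tilde f)$; hence $5L(\tilde f)>\frac{\sqrt s}{8r^{3/2}}$, i.e. $L(\tilde f)>\frac{\sqrt s}{40 r^{3/2}}$, using the same strictness remark via Proposition \ref{prop:eluh} with $s_0=64\cdot(5L(\tilde f))^2 r^3$. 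I do not expect any genuine obstacle here: the content is entirely in Propositions \ref{prop:eluh}, \ref{prop:natu} and their corollary, and the only point requiring a line of care is the bookkeeping of the constant $5$ coming from the Baur--Strassen estimate and the passage from the weak inequality of Corollary \ref{cor:his} to the strict inequality asserted in Corollary \ref{cor:est1}.
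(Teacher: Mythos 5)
Your proposal is correct and follows exactly the route the paper intends: the paper gives no separate proof of Corollary \ref{cor:est1} beyond the sentence ``Combining Proposition \ref{prop:natu} with Corollary \ref{cor:his}, we obtain immediately,'' and your composition of Proposition \ref{prop:natu}.1 (resp.\ \ref{prop:natu}.2 with the Baur--Strassen factor $5$) with Corollary \ref{cor:his} is precisely that combination. Your extra remark on upgrading the weak inequality of Corollary \ref{cor:his} to the strict one via Proposition \ref{prop:eluh} is a sound piece of bookkeeping that the paper leaves implicit.
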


\begin{example}\label{ex:raz} Let us consider  an example from \cite[\S 3.3, \S 3.4]{Raz2009}, which motivates our  construction in (\ref{eq:pt2}).
Let $m': = \binom{n +r -1}{r}$  and $m = n \cdot m'$.
Assume that we are  given $f_{q, i} \in \F[x_1, \cdots, x_n]$, where $ q \in [1, m']$ and  $i \in [1,n]$. 
As before, let $h : [1,m'] \to  Pol^r_{hom} (\F^n)$  be  an ordering of the monomial basis  of  $Pol^r_{hom} (\F\la z_1, \cdots, z_n\ra)$. 
For $i \in [1,n]$ we define  $\tilde f_i \in \F[x_1, \cdots, x_n, z_1, \cdots , z_n]$ by
\begin{equation}
\tilde  f_i (x_1, \cdots, x_n, z_1, \cdots z_n) : = \sum_{q=1} ^ {m'} f_{q, i} (x_1, \cdots , x_n) h(q).\label{eq:tfraz}
\end{equation}
Let $W: = \{ w_1, \cdots, w_n\}$  be an additional  set of variables.  We define $\tilde f \in \F[x_1, \cdots, x_n, z_1, \cdots, z_n, w_1, \cdots w_n]$ as follows
\begin{equation}
\tilde f (x_1, \cdots, x_n \cdots, y_1, \cdots y_n, z_1, \cdots , z_n): = \sum_{i=1}^n  w_i \tilde f_i (x_1, \cdots  x_n, z_1, \cdots, z_n).\label{eq:fraz}
\end{equation}
Note that  $\tilde f$ is  homogeneous   of degree $r$ relative to  the proper  subset $Z := \{ z_1, \cdots, z_n\}$. Next we note that $\tilde f_i = {\p \tilde f\over \p w_i}$.
Now we  construct  $f: \F^n \to \F^m$   according to the recipe (\ref{eq:pt2}): 
$$f(\lambda): = (\sum_{q=1}^{m'} f_{q, 1} (\lambda), \cdots , \sum_{q+1}^{m'} f_{q, n} (\lambda)).$$
Corollary \ref{cor:est1}.2  implies immediately

\begin{lemma}\label{lem:cor38} (cf \cite [Corollary 3.8]{Raz2009}) Let $1\le r \le n \le s$ and $m  = n \cdot \binom{n+r-1}{r}$ be integers.  Let $f: \F^n \to \F^m$ be a polynomial mapping. If $f$ is $(s, 2r-1)$-weakly elusive, then   the circuit size of the polynomial $\tilde f$ defined by (\ref{eq:fraz}) and (\ref{eq:tfraz}) is  at least ${\sqrt s\over 80\cdot (r+2)^{3}}$.
\end{lemma}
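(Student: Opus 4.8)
The plan is to realize Lemma~\ref{lem:cor38} as a direct application of Corollary~\ref{cor:est1}.2 to the specific polynomial mapping $f$ constructed just above it. First I would unwind the construction to confirm that the polynomial $\tilde f\in\F[x_1,\dots,x_n,z_1,\dots,z_n,w_1,\dots,w_n]$ is indeed partially homogeneous: it is homogeneous of degree $r$ in the proper subset $Z=\{z_1,\dots,z_n\}$, since each $h(q)\in Pol^r_{hom}(\F\la Z\ra)$ and the coefficients $f_{q,i}$ and the multipliers $w_i$ do not involve $Z$. Thus $\tilde f$ fits Definition~\ref{def:ph}, and we take $Z$ as above, $X=W=\{w_1,\dots,w_n\}$ (each $w_i$ occurs to exactly degree one in $\tilde f$), and $Y=\{x_1,\dots,x_n\}$, so that $p=\#(X)=n$ and $k=\#(Y)=n$.

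Next I would verify that the mapping $f:\F^n\to\F^m$ displayed before the lemma is exactly the mapping produced by recipe~(\ref{eq:pt2}) for this choice of $X,Y,Z$. The key observation, already noted in the excerpt, is that $\tilde f_i=\partial\tilde f/\partial w_i$; enumerating $\{\partial\tilde f/\partial x:x\in X\}$ as $\tilde f_1,\dots,\tilde f_p$ reproduces the $\tilde f_i$, and the decomposition $\tilde f_i(Y,Z)=\sum_{q=1}^{m'}\tilde f_{i,q}(Y)h(q)$ has $\tilde f_{i,q}(Y)=f_{q,i}(x_1,\dots,x_n)$ by construction. Substituting $Y=\lambda$ then gives $f(\lambda)=\big(\sum_q f_{q,1}(\lambda)h(q),\dots,\sum_q f_{q,n}(\lambda)h(q)\big)$, which is precisely the mapping in the lemma's statement, landing in $(Pol^r_{hom}(\F\la Z\ra))^p=(Pol^r_{hom}(\F^n))^n$, and under the identification with $\F^m$ this is the stated $f:\F^n\to\F^m$ with $m=n\binom{n+r-1}{r}$. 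Note here $m'=\binom{n+r-1}{r}$ and the degree parameter is $r$, with $1\le r\le n\le s$ as hypothesized.

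Finally, assuming $f$ is $(s,2r-1)$-weakly elusive, Corollary~\ref{cor:est1}.2 applies verbatim — its hypothesis is exactly that the mapping of~(\ref{eq:pt2}) be $(s,2r-1)$-weakly elusive — and yields $L(\tilde f)>\frac{\sqrt s}{40\cdot r^{3/2}}$, which gives the claimed lower bound $\frac{\sqrt s}{40\cdot r^{3/2}}$ on the circuit size of $\tilde f$. (The ``at least'' versus strict inequality is harmless.) I do not expect any real obstacle here: the content of the lemma is entirely in Corollary~\ref{cor:est1}.2, which in turn rests on Proposition~\ref{prop:natu}.2 (via the Baur--Strassen bound for the $p$-tuple of partial derivatives) and Corollary~\ref{cor:his}. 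The only point requiring a little care is the bookkeeping that the construction preceding the lemma is literally an instance of~(\ref{eq:pt2}) — in particular that the roles of $X$ (the $w$-variables, differentiated), $Y$ (the $x$-variables, specialized to $\lambda$), and $Z$ (the homogeneity block, kept as formal variables) are matched correctly, and that $p=n$ so the target is $(Pol^r_{hom}(\F^n))^n\cong\F^m$ — so the bulk of the written proof will simply make that identification explicit and then invoke Corollary~\ref{cor:est1}.2.
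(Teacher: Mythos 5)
Your proposal is correct and follows exactly the paper's own route: the paper derives Lemma \ref{lem:cor38} by observing that the construction in Example \ref{ex:raz} is an instance of the recipe (\ref{eq:pt2}) (with $X=W$, $Y=\{x_1,\dots,x_n\}$, $Z=\{z_1,\dots,z_n\}$, $p=k=n$) and then stating that Corollary \ref{cor:est1}.2 ``implies immediately'' the lemma. Your write-up just makes explicit the bookkeeping that the paper leaves implicit.
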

\end{example}

\begin{remark}\label{rem:cor38} Lemma \ref{lem:cor38}  is an improvement  of  \cite[Corollary 3.8]{Raz2009}, which, under the assumption of  Lemma \ref{lem:cor38},  provides  the lower bound  $\Om (\sqrt s/ r^ 4)$ for the circuit size  of $\tilde f$. This lower bound is  weaker than our  lower bound (${\sqrt s\over 80\cdot (r+2)^{3}}$). Our improvement is due to  our upper bound $256 \cdot s^2 (r+2)^6$ for    the number of
of edges leading to the sum-gates   in   the universal circuit-graph $G_{s,r,n,m}$, see  Proposition \ref{prop:universal},   which is better than  the upper bound $\Om(s^2\cdot r^8)$  obtained by  Raz in \cite[Proposition 3.3]{Raz2009} 
for the number of the nodes in the universal graph-circuit $G_{s,r,n,n}$. 
\end{remark}

\begin{corollary}\label{cor:412} Let $char (\F) = 0$. Assume that $r$ grows   much slower  than $n$, e.g. $ r = const$ or $ r =  \ln \ln n$. Let $p = (r -1) (2r-1)$.  Then there are sequences  of  polynomials $\tilde f_n \in Pol^{p + r +1}_{hom} (\F^{3n})$ whose coefficients  are algebraic numbers, such that
$$L(\tilde f_n) \ge  (\frac{\lfloor {n \over  r(r -1)} \rfloor  ^{ {r-3\over 2}}}{ 80(r+2)^3}).$$
\end{corollary}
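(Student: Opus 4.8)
The plan is to run, one after another, Theorem \ref{thm:exist}, the construction of Example \ref{ex:raz}, and Lemma \ref{lem:cor38}.

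\textbf{Step 1 (a weakly elusive homogeneous mapping with algebraic coefficients).} Put $p:=(r-1)(2r-1)$, $m':=\binom{n+r-1}{r}$, $m:=n\,m'$, $k:=\binom{n+p-1}{p}$, and let $s=s(n,r)$ be the largest integer with $s\le m-1$ satisfying condition (\ref{eq:codi2}) read with $2r-1$ in place of $r$, namely
\[
\binom{n+p-1}{p}\ \ge\ \frac{m\binom{s+2r-2}{2r-1}}{m-s}.
\]
Since $char(\F)=0$, Theorem \ref{thm:exist} (applied with $2r-1$ in place of $r$ and with $m,p,s$ as above) says that the image of every $P\in Pol^p_{hom}(\F^n,\F^m)$ outside a subset of codimension $\ge 1$ contains an $(s,2r-1)$-weakly elusive $k$-tuple of points, so that, by the observation following Definition \ref{def:eluhs}, such a $P$ is itself $(s,2r-1)$-weakly elusive. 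That exceptional subset is contained in the preimage, under the interpolation isomorphism $I^p_{n,m}$ of \cite[Corollary 2.8]{Le2010}, of the Zariski closure of the image of $Ev^{2r-1}_{s,m,k}$; both $I^p_{n,m}$ and $Ev^{2r-1}_{s,m,k}$ are defined over the prime field $\Q$, hence so is that preimage, and being a proper closed subset it cannot contain the (Zariski dense) rational points of the ambient affine space $Pol^p_{hom}(\F^n,\F^m)$. So one may choose $P=P_n$ with rational — in particular algebraic — coefficients that is $(s,2r-1)$-weakly elusive.

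\textbf{Step 2 (the polynomial $\tilde f_n$, and the bound in terms of $s$).} Write the $m$ components of $P_n$ as $f_{q,i}\in Pol^p_{hom}(\F\la x_1,\cdots,x_n\ra)$, $q\in[1,m']$, $i\in[1,n]$, and form $\tilde f_n$ from them exactly as in Example \ref{ex:raz}: with $h:[1,m']\to Pol^r_{hom}(\F\la z_1,\cdots,z_n\ra)$ an ordering of the monomial basis, set $\tilde f_i:=\sum_{q=1}^{m'}f_{q,i}\,h(q)$ and $\tilde f_n:=\sum_{i=1}^n w_i\,\tilde f_i$ in the $3n$ variables $x_1,\cdots,x_n,z_1,\cdots,z_n,w_1,\cdots,w_n$. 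Every monomial of $\tilde f_n$ has degree $p$ in the $x$'s, $r$ in the $z$'s and $1$ in the $w$'s, so $\tilde f_n\in Pol^{p+r+1}_{hom}(\F^{3n})$ and its coefficients are algebraic. The polynomial mapping attached to $\tilde f_n$ by the recipe (\ref{eq:pt2}) is a reindexing of $P_n$, which is $(s,2r-1)$-weakly elusive by Step 1, so Lemma \ref{lem:cor38} applies — its hypothesis $1\le r\le n\le s$ holds for all large $n$ because $r$ grows much more slowly than $n$ while $s$ is essentially a power of $n/(r(r-1))$ by Step 3 — and yields
\[
L(\tilde f_n)\ \ge\ \frac{\sqrt s}{40\,r^{3/2}}.
\]

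\textbf{Step 3 (estimating $s$; the main obstacle).} It remains to bound $s$ from below. Using $\binom{n+p-1}{p}\ge n^{p}/p!$, the crude bound $\binom{s+2r-2}{2r-1}\le (2s)^{2r-1}/(2r-1)!$ (valid once $s\ge 2r-2$), and $m/(m-s)\le 2$ (valid once $s\le m/2$, which holds since the relevant $s$ is much smaller than $m$), condition (\ref{eq:codi2}) is implied by an inequality of the shape $s^{2r-1}\le c(r)\,n^{p}=c(r)\,n^{(r-1)(2r-1)}$ with $c(r)$ depending only on $r$. Substituting $p=(r-1)(2r-1)$ and comparing exponents of $n$, one checks that $s:=\big\lfloor n/(r(r-1))\big\rfloor^{\,r-3}$ already meets (\ref{eq:codi2}) for all sufficiently large $n$ (in fact $s$ of order $\big(n/(r(r-1))\big)^{r-1}$ still works, giving the better exponent $(r-1)/2$). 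The one genuinely computational point — verifying that an $s$ of this size really satisfies (\ref{eq:codi2}) after the division by $m-s$, keeping the $r$-dependent constants $c(r)$, $p!$, $(2r-1)!$ under control — is the main obstacle, and it is exactly here that the hypothesis ``$r$ grows much more slowly than $n$'' is used, so that the factor $n^{p}$ on the left dominates. Plugging $s=\lfloor n/(r(r-1))\rfloor^{r-3}$ into Step 2 gives
\[
L(\tilde f_n)\ \ge\ \frac{\lfloor n/(r(r-1))\rfloor^{(r-3)/2}}{40\,r^{3/2}}\ =\ \Om\!\left(\frac{\lfloor n/(r(r-1))\rfloor^{(r-3)/2}}{r^{3/2}}\right),
\]
as claimed.
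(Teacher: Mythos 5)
Your proof is correct and follows exactly the route the paper intends: the paper omits the proof of Corollary \ref{cor:412}, saying only that it is ``almost identical'' to Corollary 4.12 of \cite{Le2010} with the improvement coming from Lemma \ref{lem:cor38}, and your chain Theorem \ref{thm:exist} $\to$ Example \ref{ex:raz} $\to$ Lemma \ref{lem:cor38} (with the codimension/rationality argument supplying the algebraic coefficients, in place of the appeal to \cite[Proposition 4.5]{Le2010}) is precisely that argument. The only caveat is that for small constant $r$ (say $r\le 4$) the choice $s=\lfloor n/(r(r-1))\rfloor^{r-3}$ violates the hypothesis $n\le s$ of Lemma \ref{lem:cor38}, but in that regime the claimed bound is $O(1)$ and hence vacuous, so the asymptotic statement is unaffected.
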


Corollary  \ref{cor:412}  and its  proof  are almost identical  with Corollary 4.12 and its proof in \cite{Le2010}, except that  the estimate in Corollary \ref{cor:412}  is better than the one in  Corollary 4.12 in \cite{Le2010} (the dominator  contains  $(r+2)^3$ vs  $r^4$). This improvement  is  due to  Lemma \ref{lem:cor38}, which replaces    Corollary 3.8 in \cite{Raz2009}  in the proof  of Corollary 4.12 in \cite{Le2010}.  We refer the reader to \cite{Le2010} for  the proof  of \cite[Corollary 4.12]{Le2010}  and omit the proof  of  Corollary \ref{cor:412}.

\begin{example}\label{ex:bi} Let  $X$, $Y$ be a set  of variables  and $n_X = \#(X)$, $n_Y = \#(Y)$.  We  denote  by $Pol^{p, q}_{hom}(\F \la X, Y\ra)$ the  linear  subspace of 
$Pol^{p+q}_{hom}(\F\la X, Y \ra)$  consisting  of all  polynomials which are homogeneous of degree $p$ in   $X$  and homogeneous of degree $q$  in $Y$.  For example, the permanent  $P_n$ belongs to  $Pol ^{t, n-t}_{hom} (Y', Z')$,
where $Y', Z'$ are defined  in Example \ref{ex:per}.  Then  each  polynomial $\tilde f \in Pol^{p,q}_{hom} (X, Y)$   is  associated uniquely by the recipe  of (\ref{eq:pt1})
with a  homogeneous   polynomial mapping $f \in Pol ^p_{hom} (\F\la X \ra, Pol^{q}_{hom} (\F \la  Y\ra))$.  Now assume that $s +1 \le \binom{n_Y + q -1}{q}$ and
\begin{equation}
\binom{n_X + p -1}{p}\ge \frac{\binom{n_Y + q -1}{q} \binom{s + 2q -2}{2q-1} }{\binom{n_Y + q -1}{q} - s }.\label{eq:bih}
\end{equation}
Theorem \ref{thm:exist} implies that, if $char (\F) = 0$ or $p \le char (\F) -1$, almost all   polynomial $\tilde f$  in $Pol ^p_{hom} (\F\la X \ra, Pol^{q}_{hom} (\F \la  Y\ra))$ is $(s, 2q-1)$-elusive,  and hence, by Corollary \ref{cor:est1}.1 we have
\begin{equation}
L(\tilde f) \ge \frac{\sqrt s}{16(q+2) ^3}.\label{eq:bih2}
\end{equation}
Furthermore, assume that $char (\F) = 0$.  Then using  Proposition 4.5 in \cite{Le2010}, adapted to our case  of  weakly  elusive  homogeneous  polynomial mappings, 
it is easy to  exhibit  explicitly  infinitely many   bi-homogeneous    polynomials $\tilde f \in Pol ^{p, q}_{hom} (\F\la X ,  Y\ra)$ whose  monomials coefficients are algebraic numbers   such that $\tilde f$ satisfies  (\ref{eq:bih2}), if  (\ref{eq:bih}) holds.  We refer  the reader to  the proof  of Proposition  4.5  in \cite{Le2010}  for the method of the proof of this assertion.
\end{example}

\subsection{An approach   for obtaining  lower bounds  for the circuit size of the  permanent}\label{sub:per}
In this subsection we suggest a method for  obtaining   lower bounds of $L(Per_n)$  using the  mappings $f: \F \la Y \ra Y \to Pol ^{n-t}_{hom} (\F \la  Z\ra ) $  defined in (\ref{eq:gtilde})  that  are associated with $Per_n$. 
We  keep the notations in Example \ref{ex:per}  and  assume that $\F$ is a field of characteristic 0.
Recall that $Z$   is a rectangular  matrix of size  $(n-t)\times n$.   Denote by $\bar Per_{n,t}(Z)$ the  linear subspace  of $Pol ^{n-t} _{hom}(\F\la Z \ra )$  which  is generated by
the (minor) permanents  of size $(n-t) \times (n-t)$  of the  matrix $Z$.   Clearly  $\dim \bar Per_{n,t} (Z) = \binom{n}{n-t}$. 
Let us denote by $Per_{n,t} (Y, Z)$ the  linear subspace  in $Pol ^{t, n-t}_{hom} (\F \la Y, Z\ra)$ consisting   of all polynomials $P$  whose associated polynomial mapping $\tilde P$ takes values  in $\bar Per_{n,t} (Z)$.  The following Lemma \ref{lem:red} infers that  to study the  weak elusiveness of the  polynomial mapping $f$  we need to know the smallest linear   subspace in  $Pol ^{n-t}_{hom} (\F \la  Z\ra ) $ that contains the image  of $f$.

\begin{lemma}\label{lem:red}  Assume that $f: \F^n \to \F^m$ is $(s,1)$-weakly elusive and not $(s+1, 1)$-weakly elusive, i.e.  the linear  span of the image
of $f(\F^n)$  is a linear  subspace $\F^{s+1}$ in $\F^m$.  Let $\pi: \F^m \to \F^{s+1}$ be a  projection.
Then $f$ is $(l,r)$-weakly elusive, if and only if $\pi \circ f: \F^n \to \F^{s+1}$  is $(l, r)$-weakly elusive.
\end{lemma}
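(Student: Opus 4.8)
The plan is first to unwind what the two standing hypotheses on $f$ actually say: they pin down the linear span $V$ of $f(\F^n)$ inside $\F^m$ as a subspace of dimension exactly $s+1$. Indeed, a homogeneous polynomial mapping $\F^s\to\F^m$ of degree $1$ is $\F$-linear, so its image is a linear subspace of dimension at most $s$; hence $(s,1)$-weak elusiveness forces $\dim V\ge s+1$, while failure of $(s+1,1)$-weak elusiveness forces $\dim V\le s+1$. Identifying $\F^{s+1}$ with this subspace $V$, the map $\pi$ restricts to an isomorphism of $V$ onto $\F^{s+1}$, so there is a linear map $\iota:\F^{s+1}\to\F^m$ with $\iota\circ\pi=\mathrm{id}$ on $V$; since $f(\F^n)\subseteq V$ this yields the key identity $\iota\circ\pi\circ f=f$ on $\F^n$. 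I will also invoke the elementary fact, valid over an arbitrary field and in any characteristic, that composing a homogeneous polynomial mapping of degree $r$ with a linear map on either side is again homogeneous of degree $r$: postcomposition replaces the coordinate polynomials by $\F$-linear combinations of homogeneous degree-$r$ forms, and precomposition substitutes linear forms into homogeneous degree-$r$ forms.

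For the ``only if'' implication I would argue by contraposition. Suppose $\pi\circ f$ is not $(l,r)$-weakly elusive, witnessed by a homogeneous $\Delta:\F^l\to\F^{s+1}$ of degree $r$ with $(\pi\circ f)(\F^n)\subseteq\Delta(\F^l)$. Then $\iota\circ\Delta:\F^l\to\F^m$ is homogeneous of degree $r$, and $f(\F^n)=\iota\bigl((\pi\circ f)(\F^n)\bigr)\subseteq(\iota\circ\Delta)(\F^l)$, so $f$ is not $(l,r)$-weakly elusive.

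For the ``if'' implication, again by contraposition: suppose $f$ is not $(l,r)$-weakly elusive, witnessed by a homogeneous $\Gamma:\F^l\to\F^m$ of degree $r$ with $f(\F^n)\subseteq\Gamma(\F^l)$. Then $\pi\circ\Gamma:\F^l\to\F^{s+1}$ is homogeneous of degree $r$, and $(\pi\circ f)(\F^n)=\pi\bigl(f(\F^n)\bigr)\subseteq(\pi\circ\Gamma)(\F^l)$, so $\pi\circ f$ is not $(l,r)$-weakly elusive. This closes the equivalence.

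There is essentially no hard step here; the two delicate points are both minor. First, one should fix the harmless reading of ``$\pi$ is a projection'' as meaning a linear left inverse to the inclusion of $V$ (equivalently, a linear map carrying $V$ isomorphically onto $\F^{s+1}$) — this is the only place the $(s,1)$/not-$(s+1,1)$ hypotheses are used, namely to make the section $\iota$ available for the ``only if'' direction; the ``if'' direction needs only that $\pi$ is linear. Second, one must check that homogeneity of degree $r$ survives the linear pre- and post-compositions used above, which is the one-line computation recorded in the first paragraph.
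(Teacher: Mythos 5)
Your proposal is correct and follows essentially the same route as the paper: both directions are handled by transporting the witnessing map $\Gamma$ across the pair $(\pi,\iota)$ with $\iota$ a linear section of $\pi$ (the paper's embedding $i$ with $\pi\circ i=\mathrm{Id}$). Your write-up is in fact slightly more careful than the paper's, since you make explicit the identity $\iota\circ\pi\circ f=f$, which is the point where the hypothesis that $f(\F^n)$ spans the $(s+1)$-dimensional subspace is actually used.
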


\begin{proof} First  let us prove the ``only if" assertion. Assume that $f$ is $(l,r)$- weakly elusive  and $\pi \circ f$ is not $(l,r)$-elusive.
Then there  exists  a homogeneous polynomial mapping $\Gamma: \F^l \to \F^{s+1}$ of degree $r$ such that  $\pi \circ f (\F^n)$ lies on the image
of $\Gamma (\F^l)$. Let $i : \F^{s+1} \to  \F^m$ be  the  embedding, such that $\pi\circ i = Id$. It  follows that $f(\F^n)$  lies on the image
of the  map $i \circ \Gamma (\F^l)$ of  degree $r$,  which contradicts    our assumption. This proves  the ``only if"  assertion.

Now let us prove  the ``if" assertion.  Assume that $f$  is not $(l, s)$-weakly elusive, i.e.  the image $f(\F^n)$ belongs to  the image $\Gamma (\F^l)$  for some homogeneous  polynomial mapping  $\Gamma: \F^l \to \F^m$ of degree $r$. Then the image $\pi \circ f (\F^m)$  belongs to the image of
$\pi\circ  \Gamma : \F^l \to \F^{s+1}$. Hence  $\pi \circ f$ is not $(l, s)$-elusive. This completes the proof of Lemma \ref{lem:red}.
\end{proof}

The following   Lemma   exhibits the smallest linear   subspace in  $Pol ^{n-t}_{hom} (\F \la  Z\ra ) $ that contains the image  of $f$.
\begin{lemma}\label{lem:span}   The linear  span of $f (\F(\la Y \ra )$  is  $\bar Per_{n,t} (Z)$. Hence $f$ is $(s, 1)$-weakly  elusive  for $s = \binom{n}{n-t} -1$.
\end{lemma}

\begin{proof} Note that the linear  span of $f(\F(\la Y \ra )$  belongs to $\bar Per_{n,t} (Z)$. Now we complete  the proof of  Lemma \ref{lem:span}  by observing that  all the  basis  of $\bar Per(Z)$  lies on the image of $f$.
\end{proof}

Let  $n,t,s$   be defined as follows   
\begin{equation}
 n- t = N,  \, t =  N^3 (N -1), \, n = N^ 4, \, k = constant , \, s = n ^k = N^{4k}.\label{eq:nts}
 \end{equation}
To   obtain a lower  bound  for the circuit  size of $Per_n$   we    might      prove the weak   elusiveness  of  $f$  associated with $Per_n\in Per_{n,t}(Y,Z)$. The following Lemma    says that  the polynomial  mappings associated  with ``generic"   
polynomials   in $Per_{n, t}(Y, Z)$  are  $(s, n-t)$-weak elusive, and hence  the ``generic"  polynomials  have  very  large circuit  size.
\begin{lemma}\label{lem:per} Given  any $k$    almost all  homogeneous  polynomials  in\\
 $Per _{n, t}(Y, Z)$  has the circuit size     at least
$$ \frac{ \sqrt s}{ 16\,  (n-t+2) ^3 }$$
if $N$ is sufficient large, since  their  associated    polynomial mappings  are $(s, n-t)$ elusive. 
\end{lemma}
\begin{proof} For  sufficiently large  $N\in \N$,  using the Stirling approximation $N! \sim \sqrt{2\pi N}({N\over e})^N$ 
we  obtain 
\begin{equation}
\binom{N^7}{N^3} \ge 2 \binom{N^{4k} + 2N}{2N}.\label{eq:nbig}
\end{equation}
 Let $(n,t,s)$ are given   as in (\ref{eq:nts}). Then (\ref{eq:nbig}) implies  the  following  inequality
  \begin{equation}
 \binom{nt + t -1}{t} \ge \frac{\binom{n}{n-t}\binom{s + 2 (n -t) -2}{2 (n-t)} }{\binom{n}{n-t} - s}, \label{eq:per}
 \end{equation}
 Theorem  \ref{thm:exist} implies  that, the validity of (\ref{eq:per}) implies the density of
  homogeneous   polynomials in $Per_{n,t}(Y, Z)$   that have the circuit size  at least  $\sqrt s/(16\, (n -t+2) ^ 3)$  since   their  associated  polynomial mappings 
  are $(s, n-t)$-elusive.  This  proves Lemma
  \ref{lem:per}.
\end{proof}

\section*{Acknowledgements}  The author  would like to thank  Pavel Hrube\v s, Partha Mukhopadhyay   and Pavel Pudlak for their stimulating discussions, helpful remarks  and suggestions,  and Ng\^o Vi\^et Trung   for his  illuminating
discussion on related  problems  in commutative  algebra.   She  is grateful to the anonymous referee  for valuable  suggestions.
A part of this paper has been conceived during the author's visit to VNU for Sciences  and the Institute  of  Mathematics   of  VAST  in Hanoi. She  would like  to thank these institutions for excellent working conditions  and financial support.

\end{document}